\newcommand{\R}{\mathbf{R}}
\newcommand{\N}{\mathbf{N}}
\newtheorem{thm}{Theorem}[section]
\newtheorem{lem}[thm]{Lemma}
\newtheorem{prop}[thm]{Proposition}
\newtheorem{rem}[thm]{Remark}
\title
{Strong Convergence for Euler-Maruyama and Milstein Schemes with Asymptotic Method}
\author{
Hideyuki Tanaka
\\ 
{\small Graduate School of Science and Engineering, Ritsumeikan University }\\
{\small 1-1-1 Nojihigashi, Kusatsu, Shiga 525-8577, Japan }\\
{\small Email: hide.worldwide@gmail.com}
\\ \mbox{} \\
Toshihiro Yamada 
\\ 
{\small Graduate School of Economics, The University of Tokyo}\\
{\small 7-3-1 Hongo, Bunkyo, Tokyo, 113-0033, Japan}\\
{\small and } \\
{\small Mitsubishi UFJ Trust Investment Technology Institute Co., Ltd.\ (MTEC)}\\
{\small 2-6, Akasaka 4-Chome, Minato, Tokyo, 107-0052, Japan }\\
{\small Email: yamada@mtec-institute.co.jp}
}
\date{\empty}
\def\section{\@startsection {section}{1}{\z@}{-3.5ex plus -1ex minus
-.2ex}{2.3ex plus .2ex}{\large\bf}}
\def\subsection{\@startsection {subsection}{1}{\z@}{-3.5ex plus -1ex minus
-.2ex}{2.3ex plus .2ex}{\normalsize\bf}}
\begin{document}

\maketitle

\begin{abstract}
Motivated by weak convergence results in 
the paper of Takahashi and Yoshida (2005), 
we show strong convergence for an accelerated Euler-Maruyama scheme applied to perturbed stochastic differential equations. 
The Milstein scheme with the same acceleration is also discussed as an extended result. 
The theoretical results can be applied to analyzing the multi-level Monte Carlo method originally developed by M.B.\ Giles. 
Several numerical experiments for the SABR stochastic volatility model are presented in order to confirm the efficiency of the schemes.

\end{abstract}

\begin{flushleft}
{\bf Keywords:} 
Strong convergence; asymptotic method; multi-level Monte Carlo
\end{flushleft}

\section{Introduction}
We investigate an asymptotic method that accelerates numerical schemes for perturbed random variables. 
The general concept is as follows. 
Suppose that $F^\epsilon$ is a random variable depending on a small parameter $\epsilon$. 
Let us consider an approximation $\bar{F}^\epsilon$ for $F^\epsilon$ independently with respect to $\epsilon$. 
Then the bias $F^\epsilon - \bar{F}^\epsilon$ may be close to the bias 
$F^0 - \bar{F}^0$, since $\epsilon$ has a small effect on the value of $F^\epsilon - \bar{F}^\epsilon$. 
Therefore, we expect that 
\begin{equation}\label{asymptotic}
\bar{F}^\epsilon - \bar{F}^0 + F^0 \mbox{ is a better approximation than } \bar{F}^\epsilon.
\end{equation}
In particular, our interest is to study the above property when $F^\epsilon$ is 
a functional of a stochastic process and $\bar{F}^\epsilon$ comes from time discretization for it. 
In many cases, $F^0$ is a simpler model than $F^\epsilon$ and 
its exact distribution is well-known (e.g.\ Gaussian random variables or functionals of Gaussian processes). 
Even if the exact distribution of $F^0$ is unknown, it seems to be possible 
to provide a new scheme $\bar{F}^\epsilon - \bar{F}^0 + \tilde{F}^0$ with another more efficient scheme $\tilde{F}^0$ 
for $F^0$ (see Section \ref{subsec:sabr} as an example).

In general, we consider the following three error structures:
\begin{itemize}
\item Strong error:
\begin{align}\label{strong}
E[|F^\epsilon - (\bar{F}^\epsilon - \bar{F}^0 + F^0)|^p]^{1/p}.
\end{align}
\item Weak error:
\begin{align}\label{weak}
|E[F^\epsilon] - (E[\bar{F}^\epsilon] - E[\bar{F}^0] + E[F^0])|.
\end{align}
\item Monte Carlo bias estimator for  
$\frac{1}{M}\sum_{j=1}^M(F^{\epsilon,j} - F^{0,j}) + E[F^{0}]$ where $(F^{\epsilon,j})_j$ be an i.i.d.\ sampling of $F^\epsilon$: 
\begin{align}\label{CI}
\mathrm{Var}\Big(
\frac{1}{M}\sum_{j=1}^M (F^{\epsilon,j} - F^{0,j}) \Big).
\end{align}
\end{itemize}
Notice that in the case of Monte Carlo bias (\ref{CI}),  the term $\frac{1}{M}\sum_{j=1}^M (F^{0,j}) 
- E[F^0]$ works as a {\it control variates} method. 
For applications in strong error (\ref{strong}), we need an exact or accurate numerical simulation method for $F^0$. 
On the other hand, in the cases of weak error (\ref{weak}) and Monte Carlo bias (\ref{CI}), 
we have to know the value of $E[F^0]$, and therefore 
we need a closed formula or an accurate numerical scheme for $E[F^0]$ such as the fast Fourier transform in one dimension. 

When $F^\epsilon$ denotes a functional of a stochastic differential equation $X_t^\epsilon$, 
$\bar{F}^\epsilon$ corresponds to a certain time discretization scheme $\bar{X}_t^{\epsilon, (n)}$ ($n$: number of partition). 
Takahashi-Yoshida \cite{TY05} derived the following results in weak error sense (\ref{weak}) and 
Monte Carlo bias sense (\ref{CI}) for the Euler-Maruyama scheme $\bar{X}_t^{\epsilon, (n)}$:
\begin{align*}
E[f(X^\epsilon_T)] - ( E[f(\bar{X}_T^{\epsilon, (n)})] - E[f(\bar{X}_T^{0, (n)})] + E[f(X^0_T)] ) &= O\Big(\frac{\epsilon}{n}\Big),
\\ 
\mathrm{Var}^{1/2}\Big(
\frac{1}{M}\sum_{j=1}^M (f(\bar{X}_T^{\epsilon, (n), j}) - f(\bar{X}_T^{0, (n), j})) 
\Big) &= O\Big(\frac{\epsilon}{M^{1/2}}\Big),
\end{align*}
and hence the total error (the root-mean-squared error; RMSE) is equal to
\begin{align*}
 \mathrm{Var}^{1/2}\Big(E[f(X_T^{\epsilon})] 
- \frac{1}{M}\sum_{j=1}^M (f(\bar{X}_T^{\epsilon, (n), j}) - f(\bar{X}_T^{0, (n), j})) 
- E[f(X_T^{0})] \Big) 
\\  = O\Big(\frac{\epsilon}{n} + \frac{\epsilon}{M^{1/2}}\Big).
\end{align*}
Here they assumed some appropriate conditions for $f$ and the coefficients of $X_t^\epsilon$. 
This is the case where $F^\epsilon = f(X_T^\epsilon)$ and $\bar{F}^\epsilon = f(\bar{X}_T^{\epsilon, (n)})$ in (\ref{weak}), 
and $F^\epsilon = f(\bar{X}_T^{\epsilon, (n)})$ in (\ref{CI}). 
In order to make the total error $O(\gamma)$ with weak and Monte Carlo bias, 
the standard Euler-Maruyama scheme with i.i.d.\ sampling requires 
the computational cost $n \cdot M = O(\gamma^{-3})$, 
and in contrast, 
the accelerated Euler-Maruyama scheme with i.i.d.\ sampling requires 
the cost $O(\epsilon^3 \gamma^{-3})$. 
That is, the asymptotic method (\ref{asymptotic}) for the Euler-Maruyama scheme is 
$O(\epsilon^3)$-times faster than the standard method. 
Moreover, we can construct a sampling scheme whose computational cost turns out to be 
$O(\epsilon^{1-\delta} \gamma^{-2}(\log \gamma/\epsilon)^{2})$ for any $\delta>0$ and Lipschitz continuous function $f$ 
via the multi-level Monte Carlo method (See Theorem \ref{thm:mlmc}).

In this paper, we develop the error analysis for the Euler-Maruyama and Milstein schemes 
with the asymptotic method in strong sense (\ref{strong}). 
Under suitable conditions, we will show that for any $p \geq 2$, 
\begin{align}\label{MainClaim}
E\Big[\sup_{0\leq t \leq T}| X_{t}^\epsilon - ( \bar{X}_t^{\epsilon, (n)} - \bar{X}_t^{0, (n)} + X_t^{0}) |^p \Big]^{1/p} 
= O\Big( \frac{\epsilon}{n^\alpha}\Big)
\end{align}
with $\alpha = 1/2 \ ( = 1)$ for the Euler-Maruyama (Milstein, resp.) scheme $\bar{X}_t^{\epsilon, (n)}$. 
Although strong convergence is usually very slow, 
the asymptotic method (\ref{asymptotic}) helps to improve the speed of convergence. 

A simplest example of (\ref{MainClaim}) is for the case
where the SDE becomes the ODE when $\epsilon = 0$, namely, 
\[
dX_t^\epsilon = b(X_t^\epsilon)dt + \epsilon \sigma(X_t^\epsilon)dB_t.
\]
However, from the viewpoint of applications, we can also consider
the ($0$th-order) $\epsilon$-expansion around linear models like Black-Scholes 
(See an analytical expansion in Kunitomo-Takahashi \cite{KT03} and Takahashi-Yamada \cite{TY12}). 
Indeed, we can treat a perturbed stochastic differential equations such as 
\begin{align*}
dX_t^\epsilon &= b^\epsilon_tX_t^\epsilon dt + \sqrt{\sigma^\epsilon_t} X_t^\epsilon dB_t, 
\\ db^\epsilon_t &= h_b(b^\epsilon_t)dt + \epsilon V_b(b^\epsilon_t)dB_t, 
\\ d\sigma^\epsilon_t &= h_\sigma(\sigma^\epsilon_t)dt + \epsilon V_\sigma(\sigma^\epsilon_t)dB_t.
\end{align*}
Notice that $X_t^0$ becomes the Black-Scholes model with time-dependent coefficients. 
Therefore there are many applications in the models of 
dynamic assets with stochastic volatility and/or stochastic interest rate. 
In particular, we will discuss more general stochastic differential equations 
so-called local-stochastic volatility type models. 

This paper is organized as follows: 
Section 2 is devoted to state theoretical results for strong convergence (\ref{MainClaim}). 
In Section 3 we discuss pathwise simulation of stochastic volatility models. 
In Section 4, we introduce the multi-level Monte Carlo method and its acceleration by the asymptotic method. 
In Section 5 some numerical experiments for the SABR stochastic volatility model are given. 
In Appendix we give some mathematical results including the proof of main claims in Section 2 and 4.

%%%%%%%%%%%%%%%%%%%%%%%%%%%%%%%%%%%%%%%%%%%%%%%%%%%%%%%%%%%%%%%%%%%%%%%%%%%%%%%%%%%%%%%%%%%%%%%%%%%%%%%%%%%%%%%%%%%%%%%%%%%%%%%
\section{Strong convergence results}
As seen in the previous intruduction, 
the asymptotic method (\ref{asymptotic}) for discretizing stochastic processes 
is very natural to speed up the discretization procedure. 
We state here the basic setting to discuss the approximation schemes. 
Let us consider a stochastic differential equation (SDE) of the form
\begin{align}\label{sde}
d X_t^\epsilon = b(X_t^\epsilon, \epsilon)dt + \sigma(X_t^\epsilon, \epsilon)dB_t, \ \ X_0^\epsilon = x_0, 
\end{align}
where $b \in C(\R^N \times [0,1]; \R^N)$, $\sigma \in C(\R^N \times [0,1]; \R^N \times \R^d)$, and 
$B_t$ is a $d$-dimensional standard Brownian motion on a probability space $(\Omega, \mathcal{F}, P)$ 
with a filtration $(\mathcal{F}_t)_{t \geq 0}$ satisfying usual conditions. 
Throughout the paper, we use the equidistant partition $t_i = \frac{i}{n}T$, $0 \leq i \leq n$. 
The Euler-Maruyama and Milstein schemes will be considered with some smoothness conditions 
for the coefficients of the SDE.

\subsection{The Euler-Maruyama scheme with asymptotic method}
Let $\bar{X}_{t}^{\epsilon, (n)}$ be the Euler-Maruyama scheme for the SDE $X_t^\epsilon$ 
(Maruyama \cite{M55}): For $t \in [t_i, t_{i+1}]$, 
\begin{align}\label{EM}
\bar{X}_{t}^{\epsilon, (n)} := \bar{X}_{t_i}^{\epsilon, (n)}
 + b(\bar{X}_{t_i}^{\epsilon, (n)}, \epsilon)(t-t_i) + \sigma(\bar{X}_{t_i}^{\epsilon, (n)}, \epsilon) (B_t-B_{t_i}).
\end{align} 

The implementation of (\ref{EM}) is very simple. Indeed, practitioners only need to know 
how to simulate normal random variables. 
The error of the scheme has been analyzed deeply by many researchers (see e.g.\ \cite{TT90}, \cite{KP92}, \cite{BT95}). 
Roughly speaking, the strong order of convergence is equal to $1/2$, and the weak order is equal to $1$. 

We now prepare the assumptions for $\bar{X}$. 
\begin{description}
\item[($H_1$):] 
$
|b(x, \epsilon)| + |\sigma(x, \epsilon)| \leq C (1 + |x|).
$
\item[($H_2$):] 
$
|b(x, \epsilon) - b(y, \epsilon)| + |\sigma(x, \epsilon) - \sigma(y, \epsilon)| \leq C |x-y|.
$
\item[($H_3$):] 
$
|b(x, \epsilon) - b(x, 0)| + |\sigma(x, \epsilon) - \sigma(x, 0)| \leq C \epsilon (1 + |x|).
$
\item[($H_4$):] For every $\epsilon$, $b(\cdot, \epsilon), \sigma(\cdot, \epsilon) \in C^1$ and 
$
|\partial b(x, \epsilon) - \partial b(y, 0)| + |\partial \sigma(x, \epsilon) - \partial \sigma(y, 0)| 
\leq C (\epsilon + |x-y|).
$
\end{description}
The above constant $C$ is independent of $(x, y, \epsilon) \in \R^N\times \R^N \times [0,1]$.

Let us define the accelerated Euler-Maruyama scheme as 
$$\bar{Y}_t^{\epsilon, (n)} := \bar{X}_{t}^{\epsilon,(n)} - \bar{X}_{t}^{0, (n)} + X_{t}^0.$$
The property (\ref{asymptotic}) for strong convergence is formulated rigorously as follows. 
\begin{thm}\label{firsttheorem}
Suppose that $(H_1)$-$(H_4)$ hold.  
Then for any $p \geq 2$, there exists a constant $C = C(T,x_0,p)$ such that 
\begin{align*}
E\Big[\sup_{0\leq t \leq T}| X_{t}^\epsilon - \bar{Y}_t^{\epsilon, (n)} |^p \Big]^{1/p} \leq C \frac{\epsilon}{n^{1/2}}.
\end{align*}
\end{thm}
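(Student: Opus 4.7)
The plan is to set $E_t := X_t^\epsilon - \bar Y_t^{\epsilon,(n)} = D_t^\epsilon - \bar D_t^{(n)}$ with $D_t^\epsilon := X_t^\epsilon - X_t^0$ and $\bar D_t^{(n)} := \bar X_t^{\epsilon,(n)} - \bar X_t^{0,(n)}$, derive an integral equation for $E_t$, and close an $L^p$-estimate by Burkholder--Davis--Gundy (BDG) and Gronwall. As preparation I would first collect the standard building blocks that follow from $(H_1)$ and $(H_2)$ alone, with constants independent of $\epsilon \in [0,1]$ and $n$: the moment bounds $\sup_{t\le T}E[|X_t^\epsilon|^p + |\bar X_t^{\epsilon,(n)}|^p] \le C$, the usual Euler--Maruyama strong rate $E[\sup_{t\le T}|X_t^\epsilon - \bar X_t^{\epsilon,(n)}|^p]^{1/p} \le C n^{-1/2}$, and the $L^p$-modulus of continuity of the scheme $E[|\bar X_s^{\epsilon,(n)} - \bar X_{\eta(s)}^{\epsilon,(n)}|^p]^{1/p} \le C n^{-1/2}$, where $\eta(s)$ is the largest grid point $\le s$. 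Combining these with $(H_3)$ in an additional Gronwall argument yields the $O(\epsilon)$-perturbation bound $E[\sup_{t\le T}|D_t^\epsilon|^p]^{1/p} + E[\sup_{t\le T}|\bar D_t^{(n)}|^p]^{1/p} \le C\epsilon$.

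The heart of the argument is a decomposition exhibiting a bilinear $(\epsilon, n^{-1/2})$ structure. Writing
\begin{align*}
E_t = \int_0^t (A_s^c - A_s^d)\,ds + \int_0^t (S_s^c - S_s^d)\,dB_s,
\end{align*}
with $A_s^c := b(X_s^\epsilon, \epsilon) - b(X_s^0, 0)$ and $A_s^d := b(\bar X_{\eta(s)}^{\epsilon,(n)}, \epsilon) - b(\bar X_{\eta(s)}^{0,(n)}, 0)$ (and $S_s^c, S_s^d$ defined analogously in $\sigma$), I would linearise in $x$ at fixed $\epsilon$ via the fundamental theorem of calculus: $b(X_s^\epsilon,\epsilon) - b(\bar X_{\eta(s)}^{\epsilon,(n)},\epsilon) = M_s^\epsilon U_s^\epsilon$ with $U_s^\epsilon := X_s^\epsilon - \bar X_{\eta(s)}^{\epsilon,(n)}$ and $M_s^\epsilon := \int_0^1 \partial b(\bar X_{\eta(s)}^{\epsilon,(n)} + r U_s^\epsilon, \epsilon)\,dr$, and similarly at $\epsilon = 0$. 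Using $U_s^\epsilon - U_s^0 = E_s + (\bar D_s^{(n)} - \bar D_{\eta(s)}^{(n)})$ gives the clean splitting
\begin{align*}
A_s^c - A_s^d \;=\; M_s^0\, E_s \;+\; M_s^0\,(\bar D_s^{(n)} - \bar D_{\eta(s)}^{(n)}) \;+\; (M_s^\epsilon - M_s^0)\, U_s^\epsilon,
\end{align*}
with an identical decomposition for $S_s^c - S_s^d$.

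The first term above is the linear feedback driving the Gronwall loop, since $M_s^0$ is bounded by $(H_2)$. For the second, one checks directly from $(H_2)$, $(H_3)$ and the preliminary bounds that the one-step coefficient $b(\bar X_{\eta(s)}^{\epsilon,(n)},\epsilon) - b(\bar X_{\eta(s)}^{0,(n)},0)$ has $L^p$-norm $O(\epsilon)$, and combined with the $n^{-1/2}$-size of the Brownian increment on $[\eta(s),s]$ (by conditioning on $\mathcal{F}_{\eta(s)}$) this produces $E[|\bar D_s^{(n)} - \bar D_{\eta(s)}^{(n)}|^p]^{1/p} \le C \epsilon n^{-1/2}$. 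For the third term, $(H_4)$ yields the pointwise estimate $|M_s^\epsilon - M_s^0| \le C(\epsilon + |D_s^\epsilon| + |\bar D_{\eta(s)}^{(n)}|)$, which by the preliminary estimates is $O(\epsilon)$ in $L^{2p}$; Cauchy--Schwarz against $E[|U_s^\epsilon|^{2p}]^{1/(2p)} = O(n^{-1/2})$ then produces another $O(\epsilon n^{-1/2})$ source. Handling the diffusion splitting identically, and applying BDG before Gronwall, closes the estimate at $E[\sup_{t\le T}|E_t|^p]^{1/p} \le C\epsilon n^{-1/2}$. The main obstacle is precisely arranging for the two small parameters $\epsilon$ and $n^{-1/2}$ to \emph{multiply} rather than merely add: the decomposition above is engineered to make this happen, and hypothesis $(H_4)$---the Lipschitz closeness of $\partial b, \partial\sigma$ in $\epsilon$---is exactly what allows one to extract the crucial extra factor of $\epsilon$ from $M_s^\epsilon - M_s^0$.
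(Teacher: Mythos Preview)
Your proposal is correct and follows essentially the same strategy as the paper: linearise the four-term difference of $b$ (and $\sigma$) via the mean-value theorem so that $(H_4)$ furnishes an extra factor of $\epsilon$ multiplying the $O(n^{-1/2})$ Euler increment, then close with BDG and Gronwall. The only organisational difference is that the paper inserts $X_{\eta(s)}^\epsilon$ rather than bundling it into your $U_s^\epsilon$, splitting the error as $e^\epsilon(t)+\bar e^\epsilon(t)$ (grid-point comparison plus within-interval variation of the true process) and packaging your $M_s^\epsilon - M_s^0$ estimate into a separate ``double Lipschitz'' lemma; correspondingly its source term is $\|X_s^\epsilon - X_{\eta(s)}^\epsilon - X_s^0 + X_{\eta(s)}^0\|_p = O(\epsilon n^{-1/2})$ rather than your $\|\bar D_s^{(n)} - \bar D_{\eta(s)}^{(n)}\|_p = O(\epsilon n^{-1/2})$, but these are two sides of the same coin.
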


In particular, if we consider the small volatility model $dX_t^\epsilon = b(X_t^\epsilon)dt + \epsilon dB_t$ 
and the ODE $dX_t^0 = b(X_t^0)dt$, then intuitively speaking, 
$(X_t^\epsilon - \bar{X}_t^\epsilon) - (\bar{X}_t^0 - \bar{X}_t^0)$ cancels out 
the error from the drift term (except the effect of $\epsilon$), and the error from 
small volatility $\epsilon$ only remains. Hence the total error is proportional to $\epsilon$. 

Of course, more general situations can be considered, for example, 
if $b$ and $\sigma$ depends on time $t$, then some smoothness assumptions with respect to $(t, \epsilon)$ 
are needed in addition to $(H_1)$-$(H_4)$. 
We will not attempt to prove this, but basically the asymptotic method works as well. 

\begin{rem}
The rate of convergence of the Euler-Maruyama scheme basically 
relies on the smoothness (or the Lipschitz continuity) of coefficients of SDEs. 
If the coefficients are not smooth but H\"older continuous, 
the speed of convergence may be slow, 
as seen in the paper by Yan \cite{Yan02}. 
For obtaining the strong rate of convergence $O(n^{-1/2})$ with $\sigma(x) = x^\alpha$ ($1/2 \leq \alpha < 1$), 
a modified Euler-type scheme (called a symmetrized Euler scheme) was developed by Berkaoui et al.\ (\cite{BBD08}). 
We should mention that the Euler-Maruyama scheme may not converge strongly 
when the coefficients are non-globally Lipschitz continuous. 
For example, in \cite{HJK11} a sufficient condition that the scheme explodes is given. 
\end{rem}

\subsection{The Milstein scheme with asymptotic method}
We next discuss the Milstein scheme which has a higher order rate of convergence than 
the Euler-Maruyama scheme in strong sense. 
Just for notational convenience, we only consider the case $d=1$. 
Of course, in general dimensional setting with commutative vector fields $(\sigma^{j})_{1\leq j \leq d}$, 
we can use the (accelerated) Milstein scheme as well. 

Throughout this section, we assume the following smoothness. 
\begin{itemize}
\item For every $\epsilon$, $\sigma(\cdot, \epsilon) \in C^2.$
\end{itemize}
The Milstein scheme $\hat{X}_{t}^{\epsilon, (n)}$ for the SDE $X_t^\epsilon$ is defined by
\begin{align*}
\hat{X}_{t}^{\epsilon, (n)} &:= \hat{X}_{t_i}^{\epsilon, (n)}
 + b(\hat{X}_{t_i}^{\epsilon, (n)}, \epsilon)(t-t_i) + \sigma(\hat{X}_{t_i}^{\epsilon, (n)}, \epsilon) (B_t-B_{t_i}) 
 \\ & \ \ \ \ + \sigma\sigma'(\hat{X}_{t_i}^{\epsilon, (n)}, \epsilon) \int_{t_i}^t\int_{t_i}^s dB_r dB_s
 \\ &= \hat{X}_{t_i}^{\epsilon, (n)}
 + b(\hat{X}_{t_i}^{\epsilon, (n)}, \epsilon)(t-t_i) + \sigma(\hat{X}_{t_i}^{\epsilon, (n)}, \epsilon) (B_t-B_{t_i}) 
 \\ & \ \ \ \ + \frac{1}{2}\sigma\sigma'(\hat{X}_{t_i}^{\epsilon, (n)}, \epsilon) ((B_t-B_{t_i})^2 - (t-t_i))
\end{align*}
for $t \in [t_i, t_{i+1}]$.

We use the (stronger) assumptions for $\hat{X}$. 
\begin{description}
\item [($H'_1$):] $(H_1)$ \& 
$
|\sigma \sigma'(x, \epsilon)| 
+ |b \sigma'(x, \epsilon)| 
+ |\sigma^2 \sigma''(x, \epsilon)| 
\leq C (1 + |x|).
$
\item [($H'_2$):] $(H_2)$ \& 
$
|\sigma\sigma'(x, \epsilon) - \sigma\sigma'(y, \epsilon)| 
+ |b\sigma'(x, \epsilon) - b\sigma'(y, \epsilon)|
+ |\sigma^2\sigma''(x, \epsilon) - \sigma^2\sigma''(y, \epsilon)|
\leq C |x-y|.
$
\item [($H'_3$):] $(H_3)$ \& 
$
|\sigma\sigma'(x, \epsilon) - \sigma\sigma'(x, 0)| 
+ |b\sigma'(x, \epsilon) - b\sigma'(x, 0)| 
+ |\sigma^2\sigma''(x, \epsilon) - \sigma^2\sigma''(x, 0)| 
\leq C \epsilon (1 + |x|).
$
\item [($H'_4$):] $(H_4)$ \& 
$
|(\sigma\sigma')'(x, \epsilon) - (\sigma\sigma')'(y, 0)| 
\leq C (\epsilon + |x-y|).
$
\end{description}

Let us define the accelerated Milstein scheme as 
$$
\hat{Y}_t^{\epsilon, (n)} := \hat{X}_{t}^{\epsilon,(n)} - \hat{X}_{t}^{0, (n)} + X_{t}^0.
$$ 
Then we can get the higher order convergence rate. 
\begin{thm}\label{secondtheorem}
Suppose that $(H'_1)$-$(H'_4)$ hold. 
Then for any $p \geq 2$, there exists a constant $C = C(T,x_0,p)$ such that 
\begin{align*}
E\Big[\sup_{0\leq t \leq T}| X_{t}^\epsilon - \hat{Y}_t^{\epsilon, (n)} |^p \Big]^{1/p} \leq C \frac{\epsilon}{n}.
\end{align*}
\end{thm}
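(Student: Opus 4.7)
The plan is to mirror the strategy used for Theorem~\ref{firsttheorem}, extracting an extra factor of $n^{-1/2}$ from the Milstein correction and the additional smoothness on $\sigma\sigma'$ provided by $(H'_4)$. Set
$$Z_t:=X_t^\epsilon-\hat{Y}_t^{\epsilon,(n)}=(X_t^\epsilon-\hat{X}_t^{\epsilon,(n)})-(X_t^0-\hat{X}_t^{0,(n)}),$$
so that $Z_0=0$ and the goal is $E[\sup_{t\le T}|Z_t|^p]^{1/p}=O(\epsilon/n)$. Several auxiliary bounds, uniform in $(\epsilon,n)$, will be needed: the standard $L^p$ moment estimates on $X^\epsilon,X^0,\hat{X}^{\epsilon,(n)},\hat{X}^{0,(n)}$ from $(H'_1)$--$(H'_2)$; the classical Milstein estimate $\|X_t^\epsilon-\hat{X}_t^{\epsilon,(n)}\|_p=O(1/n)$; and the perturbation bounds
$$\|X_t^\epsilon-X_t^0\|_p+\|\hat{X}_t^{\epsilon,(n)}-\hat{X}_t^{0,(n)}\|_p=O(\epsilon),$$
obtained by differencing the respective equations and applying Gronwall with $(H'_3)$.

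The core step is to derive an It\^o equation for $Z_t$ and to isolate on the right-hand side a linear-in-$Z$ part (for Gronwall) plus a remainder whose drift and diffusion coefficients are of $L^p$-size $O(\epsilon/n)$. Using the It\^o--Taylor expansion at the grid point $\eta(s):=t_i$ for $s\in[t_i,t_{i+1})$, the Milstein local diffusion error takes the form
$$\sigma(X_s^\epsilon,\epsilon)-\sigma(\hat{X}_{\eta(s)}^{\epsilon,(n)},\epsilon)-(\sigma\sigma')(\hat{X}_{\eta(s)}^{\epsilon,(n)},\epsilon)(B_s-B_{\eta(s)})=\mathcal{R}_s^\epsilon,$$
with $\mathcal{R}_s^\epsilon$ an explicit second-order remainder of $L^p$-size $O(1/n)$ involving $\partial\sigma$ and $(\sigma\sigma')'$ at intermediate arguments; an analogous decomposition holds for the drift term and at $\epsilon=0$. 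When expanding $\mathcal{R}_s^\epsilon-\mathcal{R}_s^0$ one invokes $(H'_4)$: every factor of the form $\partial b(x,\epsilon)-\partial b(y,0)$ or $(\sigma\sigma')'(x,\epsilon)-(\sigma\sigma')'(y,0)$ is bounded by $C(\epsilon+|x-y|)$, and since the intermediate arguments at levels $\epsilon$ and $0$ differ by $O(\epsilon)$ in $L^p$, each such factor contributes $O(\epsilon)$. Multiplied by the $O(1/n)$ size of the remainder this yields $\|\mathcal{R}_s^\epsilon-\mathcal{R}_s^0\|_p=O(\epsilon/n)$. The first-order Taylor terms aggregate into a linear drift and diffusion in $Z_s$ whose coefficients are bounded uniformly by $(H'_2)$.

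Putting the pieces together produces an identity
$$Z_t=\int_0^t\alpha_s Z_s\,ds+\int_0^t\beta_s Z_s\,dB_s+R_t,$$
where $\alpha_s,\beta_s$ are bounded and $R_t$ is an It\^o process with drift and diffusion bounded in $L^p$ by $C\epsilon/n$. Applying the Burkholder--Davis--Gundy inequality and Gronwall's lemma to $E[\sup_{r\le t}|Z_r|^p]$ closes the estimate. The principal obstacle is the bookkeeping in the middle step: one has to verify that every $O(1/n)$ piece of the local Milstein error either pairs with a coefficient difference controlled by $(H'_4)$ or multiplies a prefactor of size $O(\epsilon)$, so that after the subtraction $\mathcal{R}_s^\epsilon-\mathcal{R}_s^0$ no $O(1/n)$ term without an accompanying $\epsilon$ survives. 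This is precisely where the additional regularity of $\sigma\sigma'$ in $(H'_4)$, not needed for Theorem~\ref{firsttheorem}, becomes essential.
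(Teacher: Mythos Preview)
Your proposal is correct and follows essentially the same route as the paper: both arguments use the It\^o--Taylor expansion of $\sigma(X_s^\epsilon,\epsilon)$ at the grid point $\eta(s)$, the cross-difference control from $(H'_4)$ (which is exactly the paper's Lemma~\ref{lemB3}) to extract the extra factor $\epsilon$, and then BDG plus Gronwall on $E[\sup_{s\le t}|Z_s|^p]$. The only difference is packaging---the paper writes out the explicit decomposition $\tilde{e}^\epsilon+\sum_{i=1}^3\hat{e}_i^\epsilon$ and bounds each piece, whereas you phrase the same content as a perturbed linear equation $Z_t=\int_0^t\alpha_sZ_s\,ds+\int_0^t\beta_sZ_s\,dB_s+R_t$; the auxiliary inputs (Lemmas~\ref{lemB1}--\ref{lemB2}) and the mechanism by which the inner integrals $\int_{\eta(s)}^s\cdot\,dr$ and $\int_{\eta(s)}^s\cdot\,dB_r$ produce the additional $n^{-1/2}$ are identical.
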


\section{Application to pathwise simulation of stochastic volatility models}
Our goal in this section is to construct a faster pathwise approximation 
for perturbed stochastic differential equations which appear in financial modeling of volatility. 

\subsection{An accelerated scheme for SABR model}\label{subsec:sabr}

In financial modeling, the SABR model plays a role to fit the implied volatility especially in short time. 
The model is given by the SDE (Hagan et al.\ \cite{HKLW02}) 
\begin{align*}
dS_t &= \sqrt{\alpha_t} S_t^\beta dB^1_t
\\ d \alpha_t &= \nu \alpha_t (\rho dB_t^1 + \sqrt{1-\rho^2} dB_t^2).
\end{align*}
The volatility is not a mean-reversion process, hence this model does not suit for pricing long-dated options. 
If $\beta < 1$, as far as the authors know, there is no exact pathwise simulation method for the above SDE. 
In weak sense, several accurate simulation methods via Bessel processes are known.

To avoid that the volatility process $\alpha_t$ becomes negative in approximation procedures, 
we use a logarithmic transform for $\alpha_t$. 
\begin{align*}
dS_t &= \sqrt{\alpha_0 \exp(\tilde{\alpha}_t)} S_t^\beta dB^1_t
\\ d \tilde{\alpha_t} &= -\frac{\nu^2}{2} dt + \nu (\rho dB_t^1 + \sqrt{1-\rho^2} dB_t^2).
\end{align*}

Consider $\epsilon = \nu$. Since we do not know exact pathwise simulation methods for $S_t^0$, 
we substitute the Milstein scheme $\hat{S}_t^0$ for $S_t^0$. 
Therefore, we can use an $O(\frac{\epsilon}{\sqrt{n}} + \frac{1}{n})$-scheme defined by 
\begin{align*}
\tilde{Y}_t^{\epsilon} := 
\bar{S}_t^\epsilon - \bar{S}_t^0 + \hat{S}_t^0.
\end{align*}
When $\nu$ is small enough, a typical sample path is like Figure \ref{fig1}. 
Here we use $n=16$ for the standard Euler-Maruyama scheme (Standard E-M) and the above accelerated scheme (Accelerated).

\begin{figure}[tb]
  \begin{center}
      \includegraphics[width=75mm]{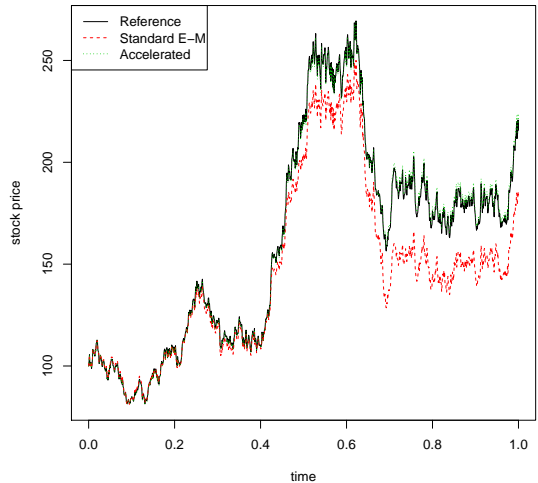} 
    \caption{A sample path of discretized SABR model when $\nu$ is small.}
    \label{fig1}
  \end{center}
\end{figure}

We next turn to consider another formal approximation scheme. 
Formally, $x^\beta \approx x$ when $\beta \approx 1$ and especially $x \approx 1$. 
Thus consider the scaling $L_t := S_t / S_0$, $\beta = \beta(\epsilon)$. 
\begin{align*}
dL_t^\epsilon &= \sqrt{\alpha_0 \exp(\tilde{\alpha}_t)}S_0^{\beta-1} (L_t^\epsilon)^\beta dB^1_t, 
\\ dL_t^0 &= \sqrt{\alpha_0}S_0^{\beta-1} L_t^0 dB^1_t.
\end{align*}
Here $S_0^{\beta-1}$ is just a constant coming from the scaling, 
thus we do not change the constant $S_0^{\beta-1}$ even when $\epsilon = 0$. 
The accelerated scheme that we want to use is 
\[
\check{Y}_t^\epsilon := \bar{S}_t^\epsilon - S_0( \bar{L}_t^0 - L_t^0 ).
\]
Since $L_t^0$ is a log-normal process, it is useful to compute the path $t \mapsto X_t^0$ and $E[f(X_T^0)]$.
We will check the efficiency of $\check{Y}$ through a numerical test later.

\subsection{General stochastic volatility models}
The following model is an extension of local-stochastic volatility models 
applicable to both short and long term contingent claims in financial markets.
\begin{align*}
dS_t &= \mu S_t dt + \sqrt{\alpha_t} S_t^\beta dB^1_t + S_{t-}dJ_t
\\ d \alpha_t &= \lambda(\theta - \alpha_t)dt + \nu \alpha_t^\gamma (\rho dB_t^1 + \sqrt{1-\rho^2} dB_t^2)
\end{align*}
where $J_t$ is a compound Poisson process, which is often used to adapt especially short-dated large volatility smile/skew. 

We remark that it is difficult to fit short-dated volatility smile/skew under the Heston model ($\beta=1$, $\gamma=1/2$, $J_t \equiv 0$), 
and then $\nu$ can take very large value. 
On the other hand, under general models with $\beta$ and $J_t$, the parameter $\nu$ need not to be so large. 

Let $\{\tau_j\}$ be the random jump times associated to $J_t$ and consider a new time partition 
$\{\tilde{t}_k\} := \{t_i\} \cup \{\tau_j\}$. On the time interval $[\tilde{t}_k, \tilde{t}_{k+1})$
we can regard the approximation problem for $S_t$ as the one for a continuous SDE. 
In particular by taking $\epsilon = \nu = 0$, the model becomes the CEV model with time-dependent coefficients.
For a technical reason, we should consider some carefull treatments 
around zero of the function $(\cdot)^\gamma$ (See \cite{BD04, BBD08, LDV10}).

%%%%%%%%%%%%%%%%%%%%%%%%%%%%%%%%%%%%%%%%%%%%%%%%%%%%%%%%%%%%%%%%%%%%%%%%%%%%%%%

\section{Application to multi-level Monte Carlo method}

The theoretical results we obtained in previous can be applied to the multi-level Monte Carlo method (MLMC in short). 
We propose an accelerated  Monte Carlo sampling for Takahashi-Yoshida's weak convergence method.

\subsection{The basic methodology of MLMC}
We forget the parameter $\epsilon$ for the time being, and denote by $X_t$ the continuous SDE $X_t^\epsilon$ defined by (\ref{sde}). 
Let us define $P := f(X_T)$ and $\bar{P}_l := f(\bar{X}_T^{(n_l)})$, and consider the time-step size 
$T/n_l = T / k^l$ for a fixed $k \in \N$. 
Let $L \in \N$ and the sampling of multi-level Monte Carlo is defined by 
\begin{align}\label{MLMC}
Y = \sum_{l=0}^L Y_l
\end{align}
where each $Y_l$ is independently distributed and is given by 
\begin{align*}
Y_l = \frac{1}{N_l} \sum_{j=1}^{N_l} 
\left\{
\begin{array}{ll}
\bar{P}_0^{(j)}, & \mbox{ if } l = 0,\\
( \bar{P}_l - \bar{P}_{l-1} )^{(j)}, & \mbox{ if } l \geq 1, 
\end{array}
\right.
\end{align*}
with i.i.d.\ sampling $\bar{P}_0^{(j)}$ or $( \bar{P}_l - \bar{P}_{l-1} )^{(j)}$, $j = 1, \dots, N_l$. 
The most important point is to use the same Brownian motion path $(B_t)_{t\geq 0}$ 
for simulating $\bar{P}_l$ and $\bar{P}_{l-1}$, and so the concept of the multi-level Monte Carlo method 
concerns the strong (pathwise) convergence rate. 

Clearly we show that 
\[
E[P] - E[Y] = E[P] - E[\bar{P}_L],
\]
therefore the weak rate of convergence depends only on the last number $L$. 
Moreover, we obtain from the independence of $(Y_l)$, 
\[
\mathrm{Var}(Y) = \sum_{l=0}^L \frac{1}{N_l} \mathrm{Var}(Y_l).
\]
and by definition $\mathrm{Var}(Y_l) = \mathrm{Var}(\bar{P}_{l} - \bar{P}_{l-1})$. 
Suppose some suitable conditions for $f$ and $X_t$. Then one can obtain 
\begin{align*}
E[\bar{P}_l - P] &= O(1/n_l), 
\\ \mathrm{Var}^{1/2}(\bar{P}_l - \bar{P}_{l-1}) &\leq \| \bar{P}_l - P\|_2 + \|\bar{P}_{l-1} - P \|_2
\\ &= O(1/ n_l^{1/2}). 
\end{align*}
The last estimate is the strong convergence result in $L^2$, 
which is discussed in this paper. 

The total computational cost $C$ is determined by the level $L$, the number of sampling $(N_l)_{l=0}^L$, and 
the number of partition $n_l (= k^l)$ so that 
\[
C = \sum_{l=0}^L N_l n_l.
\]
Suppose that the required RMSE is $O(\gamma)$. 
Then by choosing $N_l = O(\gamma^{-2} L n_l^{-1})$, the total variance $\mathrm{Var}(Y)$ is of $O(\gamma^2)$. 
Now if we set $L = \log(\gamma^{-1}) / \log(k) + O(1)$, 
then the total time discretization error $E[\bar{P}_L - P] = O(\gamma)$.
Consequently $C = O(\gamma^{-2}(\log \gamma )^2)$ for the required accuracy $O(\gamma)$.

\subsection{Accelerated MLMC sampling (with smooth payoffs)}
From now on, we reconsider the sampling of the accelerated Euler-Maruyama scheme 
introduced by Takahashi and Yoshida from the standard Monte Carlo method 
\[
\frac{1}{M}\sum_{j=1}^M (f(\bar{X}_T^{\epsilon, (n), j}) - f(\bar{X}_T^{0, (n), j}) ) + E[f(X_T^0)]
\]
to the multi-level Monte Carlo method via 
\[
\bar{P}_l^{\mathrm{new}} := f(\bar{X}_T^{\epsilon, (n_l)}) - f(\bar{X}_T^{0, (n_l)}) + E[f(X_T^0)].
\]

\begin{rem}
We can also consider another MLMC sampling method via 
\[
\bar{P}_l^{\mathrm{another}} := f(\bar{X}_T^{\epsilon, (n_l)} - \bar{X}_T^{0, (n_l)} + X_T^0), 
\]
whose computational cost is $O(\epsilon^2\gamma^{-2}(\log \gamma /\epsilon)^2)$ for Lipschitz functions $f$.
However, in this case we cannot take advantage of the explicit formula for the term $E[f(X_T^0)]$. 
\end{rem}

Giles \cite{G08} assumed that $f$ is Lipschitz continuous to analyze the variance of estimator. 
On the other hand, we need $f \in C_b^2 := 
\{ g \in C^2(\R^N; \R) \ | \ \partial_i g \mbox{ and } \partial_{ij}g \mbox{ are bounded, } 1 \leq i, j \leq N \}$ 
in order to use the asymptotics with respect to $\epsilon$  
(Notice that $|f(x_1)-f(y_1)+f(y_2)-f(x_2)| \not\leq C |x_1 -y_1 + y_2 - x_2|$ in general). 
Our analysis follows from the lemma below.

\begin{lem}\label{lemLIP}
For $f \in C_b^2$, 
\begin{align*}
|f(x_1)-f(y_1)+f(y_2)-f(x_2)| \leq&
\frac{\| \nabla^2 f\|_\infty}{2}(|x_1-x_2| + |y_1-y_2|)|x_1-y_1| 
\\ & + \| \nabla f\|_\infty |x_1 -y_1 + y_2 - x_2|.
\end{align*}
\end{lem}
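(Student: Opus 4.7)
The plan is to reduce the four-point difference to a one-dimensional calculus identity, and then exploit the add-and-subtract decomposition that isolates precisely the two terms on the right-hand side. First, I would regroup the left-hand side as
\[
f(x_1)-f(y_1)+f(y_2)-f(x_2) = [f(x_1)-f(y_1)] - [f(x_2)-f(y_2)],
\]
and express each bracket via the fundamental theorem of calculus along the segment joining $y_i$ to $x_i$. Setting $u_i(t) := y_i + t(x_i-y_i)$ for $i=1,2$, this gives
\[
f(x_i)-f(y_i) = \int_0^1 \nabla f(u_i(t)) \cdot (x_i-y_i)\,dt.
\]

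Next, the key step is the following add-and-subtract trick applied pointwise in $t$:
\[
\nabla f(u_1)\cdot(x_1-y_1) - \nabla f(u_2)\cdot(x_2-y_2)
= \bigl[\nabla f(u_1)-\nabla f(u_2)\bigr]\cdot(x_1-y_1) + \nabla f(u_2)\cdot\bigl[(x_1-y_1)-(x_2-y_2)\bigr].
\]
The second piece is immediately controlled by $\|\nabla f\|_\infty\,|x_1-y_1+y_2-x_2|$, which reproduces the last term of the claimed inequality (independent of $t$, so integration is trivial). For the first piece, I would apply the mean value bound on $\nabla f$ coming from the bounded Hessian: $|\nabla f(u_1(t))-\nabla f(u_2(t))| \leq \|\nabla^2 f\|_\infty\,|u_1(t)-u_2(t)|$, and observe that
\[
u_1(t)-u_2(t) = (1-t)(y_1-y_2) + t(x_1-x_2),
\]
so integrating in $t$ yields $\int_0^1 |u_1(t)-u_2(t)|\,dt \leq \tfrac{1}{2}(|x_1-x_2|+|y_1-y_2|)$. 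Multiplying by the factor $|x_1-y_1|$ gives the first term on the right.

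There is no serious obstacle here: the argument is purely a second-order Taylor estimate, and the only conceptual ingredient is choosing the right split in the integrand so that the ``Lipschitz'' remainder factors out $|x_1-y_1|$ while the ``telescoping'' remainder factors out $|x_1-y_1+y_2-x_2|$. The mild point to keep track of is orienting the segments the same way on both sides so that the parameter $t$ appears symmetrically and one obtains the clean $\tfrac{1}{2}$ constant in front of $\|\nabla^2 f\|_\infty$.
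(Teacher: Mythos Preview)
Your argument is correct and is exactly the approach the paper has in mind: apply the mean value theorem once to write each difference $f(x_i)-f(y_i)$ as an integral of $\nabla f$ along the segment, perform the add-and-subtract split you describe, and then apply the mean value theorem a second time (via the bounded Hessian) to the difference of gradients. The paper's own proof is just the one-line remark ``use the mean value theorem twice'' together with a pointer to the analogous computation in Lemma~\ref{lemB3}, so your write-up is a faithful expansion of that.
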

\begin{proof}
This can be proved immediately by using the mean value theorem twice 
(See also Lemma \ref{lemB3}).
\end{proof}
Then we have the following variance estimate.
\begin{prop}\label{MLMCsmooth} Assume that $(H_1)$-$(H_4)$ hold. 
For $f \in C_b^2$, we have 
\begin{align*}
\mathrm{Var}^{1/2}(\bar{P}_l^{\mathrm{new}} - \bar{P}_{l-1}^{\mathrm{new}}) \leq C \epsilon n_{l}^{-1/2}.
\end{align*}
\end{prop}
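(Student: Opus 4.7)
The plan is to apply Lemma \ref{lemLIP} with the specific choice
$x_1 = \bar{X}_T^{\epsilon,(n_l)}$, $y_1 = \bar{X}_T^{0,(n_l)}$, $x_2 = \bar{X}_T^{\epsilon,(n_{l-1})}$, $y_2 = \bar{X}_T^{0,(n_{l-1})}$, so that $f(x_1) - f(y_1) + f(y_2) - f(x_2)$ is precisely $\bar{P}_l^{\mathrm{new}} - \bar{P}_{l-1}^{\mathrm{new}}$ (the $E[f(X_T^0)]$ terms cancel). Then $\mathrm{Var}^{1/2}(\bar{P}_l^{\mathrm{new}} - \bar{P}_{l-1}^{\mathrm{new}}) \leq \|\bar{P}_l^{\mathrm{new}} - \bar{P}_{l-1}^{\mathrm{new}}\|_2$, and the lemma splits this into two parts which I will bound separately.

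For the quadratic part $(|x_1-x_2|+|y_1-y_2|)|x_1-y_1|$, apply Cauchy--Schwarz in $L^2$ to get a product of $L^4$ norms. The factor $\|\bar{X}_T^{\epsilon,(n_l)} - \bar{X}_T^{\epsilon,(n_{l-1})}\|_4$ and its $\epsilon = 0$ analogue are $O(n_l^{-1/2})$ by the classical strong convergence of the Euler--Maruyama scheme under $(H_1)$--$(H_2)$ combined with the triangle inequality through $X_T^\epsilon$. The remaining factor $\|\bar{X}_T^{\epsilon,(n_l)} - \bar{X}_T^{0,(n_l)}\|_4$ is $O(\epsilon)$ uniformly in $n_l$: this is the analogue of the standard stability estimate, obtained by writing the discrete difference, applying BDG plus $(H_2)$ and $(H_3)$ inductively over the grid, and closing with discrete Gronwall. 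Thus this contribution is $O(\epsilon\, n_l^{-1/2})$.

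For the linear part $|x_1 - y_1 + y_2 - x_2|$, the key is to insert $X_T^\epsilon - X_T^0$ as a pivot:
\begin{align*}
\bigl\|(\bar{X}_T^{\epsilon,(n_l)} - \bar{X}_T^{0,(n_l)}) - (\bar{X}_T^{\epsilon,(n_{l-1})} - \bar{X}_T^{0,(n_{l-1})})\bigr\|_2
&\leq \bigl\|X_T^\epsilon - \bar{Y}_T^{\epsilon,(n_l)}\bigr\|_2 \\
&\quad + \bigl\|X_T^\epsilon - \bar{Y}_T^{\epsilon,(n_{l-1})}\bigr\|_2,
\end{align*}
and each term on the right is $O(\epsilon\, n_l^{-1/2})$ by Theorem \ref{firsttheorem} (using $n_{l-1} = n_l/k$). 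Multiplying by $\|\nabla f\|_\infty$ yields the same order $O(\epsilon\, n_l^{-1/2})$, and summing the two contributions completes the estimate.

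The only genuinely substantive input is the uniform-in-$n$ bound $\|\bar{X}_T^{\epsilon,(n)} - \bar{X}_T^{0,(n)}\|_p = O(\epsilon)$ in $L^4$; everything else is a direct invocation of Lemma \ref{lemLIP}, Theorem \ref{firsttheorem}, and Cauchy--Schwarz. That $L^4$ stability bound is not stated explicitly above, but it follows from exactly the same BDG/Gronwall argument used to prove Theorem \ref{firsttheorem} (taking the difference of the two Euler recursions at the same grid and using $(H_3)$ on the coefficient gap), so this is a minor obstacle rather than a real one.
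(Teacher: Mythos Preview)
Your proof is correct and uses the same ingredients as the paper (Lemma \ref{lemLIP}, Theorem \ref{firsttheorem}, and the $O(\epsilon)$ stability of the scheme in $\epsilon$), but with a slightly different assignment of variables in Lemma \ref{lemLIP}. The paper first pivots through the true solutions $X_T^\epsilon, X_T^0$ and then applies Lemma \ref{lemLIP} with $x_1 = X_T^\epsilon$, $y_1 = \bar{X}_T^{\epsilon,(n_l)}$, $x_2 = X_T^0$, $y_2 = \bar{X}_T^{0,(n_l)}$; this makes the linear term equal to $\|X_T^\epsilon - \bar{Y}_T^{\epsilon,(n_l)}\|_2$ directly, so Theorem \ref{firsttheorem} applies without a second pivot. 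Your version applies the lemma straight to the two discrete levels and then pivots in the linear term, which costs one extra triangle inequality but is otherwise equivalent. Incidentally, the $L^4$ bound $\|\bar{X}_T^{\epsilon,(n)} - \bar{X}_T^{0,(n)}\|_4 = O(\epsilon)$ that you flag as needing a separate argument is in fact recorded in the paper as Lemma \ref{lemB2}(ii).
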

\begin{proof}
The variance of the difference $\bar{P}_l^{\mathrm{new}} - \bar{P}_{l-1}^{\mathrm{new}}$ is estimated as 
\begin{align*}
\mathrm{Var}^{1/2}(\bar{P}_l^{\mathrm{new}} - \bar{P}_{l-1}^{\mathrm{new}}) 
\leq& \ \| f(\bar{X}_T^{\epsilon, (n_l)}) - f(\bar{X}_T^{0, (n_l)} )
-( f(\bar{X}_T^{\epsilon, (n_{l-1})}) - f(\bar{X}_T^{0, (n_{l-1})}))\|_2
\\ \leq& \ \| f(X_T^\epsilon) - f(\bar{X}_T^{\epsilon, (n_l)}) + f(\bar{X}_T^{0, (n_l)}) - f(X_T^0) \|_2
\\ & + \| f(X_T^\epsilon) - f(\bar{X}_T^{\epsilon, (n_{l-1})}) + f(\bar{X}_T^{0, (n_{l-1})}) - f(X_T^0) \|_2.
\end{align*}
Thus by Lemma \ref{lemLIP}, 
\begin{align*}
& \| f(X_T^\epsilon) - f(\bar{X}_T^{\epsilon, (n_l)}) + f(\bar{X}_T^{0, (n_l)}) - f(X_T^0) \|_2
\\ & \leq 
C ((\| X_T^\epsilon - X_T^0 \|_4 + \| \bar{X}_T^{\epsilon, (n_l)} - \bar{X}_T^{0, (n_l)}\|_4 )\|X_T^\epsilon - \bar{X}_T^{\epsilon, (n_l)}\|_4
\\ & \ \ \ \ + \| X_T^\epsilon - \bar{X}_T^{\epsilon, (n_l)} + \bar{X}_T^{0, (n_l)} - X_T^0\|_2 ).
\end{align*}
Hence using Lemma \ref{lemB1}-\ref{lemB2} and Theorem \ref{firsttheorem}, we get the rate of convergence.
\end{proof}

For the use of the multi-level Monte Carlo method, we have obtained the results as follows. 
\begin{align*}
E[\bar{P}_l^{\mathrm{new}} - P] &= O(\epsilon/n_l), 
\\ \mathrm{Var}^{1/2}(\bar{P}_l^{\mathrm{new}} - \bar{P}_{l-1}^{\mathrm{new}}) &= O(\epsilon/ n_l^{1/2}). 
\end{align*}
So the estimator for $(\bar{P}_l^{\mathrm{new}})_{l\geq 0}$ has an equivalent effect to the one for $(\bar{P}_l)_{l\geq 0}$ 
with the required error $O(\gamma/\epsilon)$.
Consequently we get the order of computational cost $O(\epsilon^2 \gamma^{-2} (\log(\gamma / \epsilon))^{-2})$. 
Both the asymptotic method and multi-level Monte Carlo method are very easily computable, 
so that practitioners will get large benefit only with small additional implementation cost. 

\begin{rem}
Clearly, we can also check the variance estimate for the accelerated Milstein scheme. 
Let $\hat{P}_l^{\mathrm{new}} := f(\hat{X}_T^{\epsilon, (n_l)}) - f(\hat{X}_T^{0, (n_l)}) + E[f(X_T^0)]$. 
By a similar argument, we derive that under $(H_1')$-$(H_4')$ and $f \in C_b^2$, 
\begin{align}\label{M_variance}
\mathrm{Var}^{1/2}(\hat{P}_l^{\mathrm{new}} - \hat{P}_{l-1}^{\mathrm{new}}) \leq C \epsilon n_{l}^{-1}. 
\end{align} 
We have not obtained weak convergence results for the accelerated Milstein scheme yet. 
However, we guess that from the basic proof of Takahashi-Yoshida \cite{TY05}, it holds that 
\begin{align}\label{M_weak}
E[f(X^\epsilon_T)] - ( E[f(\hat{X}_T^{\epsilon, (n)})] - E[f(\hat{X}_T^{0, (n)})] + E[f(X^0_T)] ) = O\Big(\frac{\epsilon}{n}\Big)
\end{align}
under some smoothness conditions for $f$ and the coefficients of $X_t^\epsilon$. 
Thus combining the results (\ref{M_variance}), (\ref{M_weak}) and the discussion in Giles \cite{G08, G07}, 
we finally conclude that the total computational cost is $O(\epsilon^2\gamma^{-2})$.
\end{rem}

\subsection{Lipschitz payoffs}
Let us consider the first component $(X_T^\epsilon)^{(1)}$ as an asset dynamics. 
Our interest is pricing an option $f((X_T^\epsilon)^{(1)})$ with Lipschitz payoffs 
$f:\R \rightarrow \R$. 
Set $\bar{P}_l^{\mathrm{new}} = f((\bar{X}_T^{\epsilon, (n_l)})^{(1)}) - f((\bar{X}_T^{0, (n_l)})^{(1)}) + E[f((X_T^0)^{(1)})].$ 
Then we can obtain an upper bound estimate as follows. 

\begin{thm}\label{thm:mlmc}
Assume $(H_1)$-$(H_4)$ and 
$f: \R \rightarrow \R$ is a Lipschitz continuous function 
whose weak derivative has bounded variation in $\R$. 
In addition, suppose $(X_T^0)^{(1)}$ has a bounded density, and  
$(\bar{X}_T^0)^{(1)}$ also has a bounded density uniformly with respect to $n$. 
Then we have for any small $\delta >0$, 
\begin{align*}
\mathrm{Var}^{1/2}(\bar{P}_l^{\mathrm{new}} - \bar{P}_{l-1}^{\mathrm{new}}) \leq C \epsilon^{(1-\delta)/2} n_{l}^{-1/2}.
\end{align*}
\end{thm}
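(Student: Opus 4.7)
The plan is to regularize $f$ by a smooth mollifier $f_\eta := f * \rho_\eta$, apply the Proposition \ref{MLMCsmooth} machinery to the smoothed part, and control the remainder $g := f - f_\eta$ using the bounded-density hypothesis. The mollification scale $\eta$ is optimized at the end in terms of $\epsilon$ and $\delta$.

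Fix $\rho \in C_c^\infty(\R)$ nonnegative with $\int \rho = 1$ and $\mathrm{supp}\,\rho \subset [-1,1]$, let $\rho_\eta(x) := \eta^{-1}\rho(x/\eta)$, and put $f_\eta := f * \rho_\eta$, $g := f - f_\eta$. Since $f$ is Lipschitz and the weak derivative $f'$ has bounded variation (so the distributional second derivative is a finite signed Radon measure of total mass $V(f')$), one verifies that $f_\eta \in C_b^2$ with
\[
\|g\|_\infty \le \|f'\|_\infty\,\eta, \qquad \|f_\eta'\|_\infty \le \|f'\|_\infty, \qquad \|f_\eta''\|_\infty \le V(f')/\eta,
\]
and additionally the key integrability $\|g'\|_{L^1(\R)} \le C\,V(f')\,\eta$ (obtained from the $L^1$ modulus of continuity of a BV function).

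Decompose $\bar{P}_l^{\mathrm{new}} = \bar{P}_l^{\mathrm{new},\eta} + \bar{P}_l^{\mathrm{new},g}$ according to $f = f_\eta + g$. By the triangle inequality,
\[
\mathrm{Var}^{1/2}(\bar{P}_l^{\mathrm{new}} - \bar{P}_{l-1}^{\mathrm{new}}) \le \mathcal{E}_\eta + \mathcal{E}_g.
\]
For the smooth part, I re-run the proof of Proposition \ref{MLMCsmooth} with $f_\eta$, keeping the explicit dependence on $\|f_\eta'\|_\infty$ and $\|f_\eta''\|_\infty$; combining Lemma \ref{lemLIP} with Theorem \ref{firsttheorem} and Lemmas \ref{lemB1}--\ref{lemB2} yields
\[
\mathcal{E}_\eta \;\le\; C\,\epsilon\,n_l^{-1/2} + C\,\epsilon^{2}\,(\eta\,n_l^{1/2})^{-1}.
\]

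For the residue $\mathcal{E}_g$, the goal is a bound of the form $\mathcal{E}_g \le C\,\eta^{\alpha}\,n_l^{-1/2}$ with $\alpha > 0$. Writing $g(A)-g(B) = \int g'(u)\,\mathbf{1}_{u\in[A\wedge B,\,A\vee B]}\,du$ for each of the four evaluations appearing in $\bar{P}_l^{\mathrm{new},g} - \bar{P}_{l-1}^{\mathrm{new},g}$, I plan to use the identity $P(u\in[X\wedge Y,X\vee Y]) = E[|\mathbf{1}_{X\ge u} - \mathbf{1}_{Y\ge u}|]$ together with the uniformly bounded density of $(X_T^0)^{(1)}$ and $(\bar{X}_T^{0,(n_l)})^{(1)}$ (which gives Lipschitz control on the corresponding distribution functions) and the $L^1$-smallness $\|g'\|_{L^1}\le C\eta$. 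This Avikainen-type estimate, applied to each $\epsilon = 0$ pair and then transferred to the $\epsilon$-versions via the strong closeness $\|X^\epsilon - X^0\|_p \le C\epsilon$ and $\|\bar{X}^{\epsilon,(n)} - \bar{X}^{0,(n)}\|_p \le C\epsilon$, should yield $\mathcal{E}_g \le C\,\eta^{1-\delta'}\,n_l^{-1/2}$ for any fixed $\delta'>0$.

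Finally, choose $\eta = \epsilon^{\beta}$ with $\beta$ slightly less than $(3+\delta)/2$ so that $\epsilon^{2}/(\eta n_l^{1/2}) \le C\,\epsilon^{(1-\delta)/2}\,n_l^{-1/2}$, and simultaneously $\beta(1-\delta')\ge (1-\delta)/2$ so that $\eta^{1-\delta'} n_l^{-1/2} \le C\,\epsilon^{(1-\delta)/2}\,n_l^{-1/2}$. This produces the claimed bound
\[
\mathrm{Var}^{1/2}(\bar{P}_l^{\mathrm{new}} - \bar{P}_{l-1}^{\mathrm{new}}) \;\le\; C\,\epsilon^{(1-\delta)/2}\,n_l^{-1/2}.
\]

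The principal technical obstacle is the sharp estimate of $\mathcal{E}_g$. A naive combination of $\|g\|_\infty \le C\eta$ with the Lipschitz bound on $g$ and strong convergence $\|\bar{X}^{(n)} - X\|_1 = O(n^{-1/2})$ yields only $\mathcal{E}_g \le C\eta^{1/2}n_l^{-1/4}$, which cannot be absorbed into the smooth-part bound. The factor $n_l^{1/4}$ that must be recovered is precisely the $n_l^{1/4}$-loss that the density hypothesis on $(X_T^0)^{(1)}$ and $(\bar{X}_T^{0,(n_l)})^{(1)}$ is designed to avoid; the small $\delta$-loss in the exponent of $\epsilon$ is the price paid for this interpolation.
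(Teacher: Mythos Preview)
Your decomposition $f=f_\eta+g$ is a natural idea, but the argument has a real gap at the residual step, and a secondary miscount in the smooth step.

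For the smooth part, re-running Lemma~\ref{lemLIP} with $f_\eta$ gives
\[
\mathcal{E}_\eta \ \le\ C\,\|f_\eta'\|_\infty\,\epsilon\,n_l^{-1/2}
\;+\; C\,\|f_\eta''\|_\infty\,\big(\|X_T^\epsilon-X_T^0\|_4+\|\bar X_T^{\epsilon}-\bar X_T^{0}\|_4\big)\,\|X_T^\epsilon-\bar X_T^{\epsilon}\|_4,
\]
and the second term is $C\eta^{-1}\cdot\epsilon\cdot n_l^{-1/2}$, not $C\epsilon^{2}(\eta n_l^{1/2})^{-1}$: there is only one factor of $\epsilon$ available from $|x_1-x_2|+|y_1-y_2|$. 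This by itself is not fatal (one can still choose $\eta\approx\epsilon^{(1+\delta)/2}$), but it changes the balancing.

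The essential problem is the claim $\mathcal{E}_g\le C\eta^{1-\delta'}n_l^{-1/2}$. Your sketched route --- write $g(A)-g(B)=\int g'(u)\,[\mathbf 1_{u<A}-\mathbf 1_{u<B}]\,du$ and bound $\|\mathbf 1_{u<A}-\mathbf 1_{u<B}\|_2$ via the density hypothesis --- yields at best
\[
\|\mathbf 1_{u<A}-\mathbf 1_{u<B}\|_2^2=E|\mathbf 1_{u<A}-\mathbf 1_{u<B}|\ \le\ C\,\|A-B\|_r^{\,r/(r+1)},
\]
so after Minkowski and $\|g'\|_{L^1}\le C\eta$ you get $\mathcal{E}_g\le C\eta\big(\epsilon^{(1-\delta')/2}+n_l^{-(1-\delta')/4}\big)$. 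The $n_l$-exponent is only $-(1-\delta')/4$, and the term $C\eta\,\epsilon^{(1-\delta')/2}$ carries no $n_l^{-1/2}$ at all; no $n$-independent choice of $\eta$ reconciles these with the target $\epsilon^{(1-\delta)/2}n_l^{-1/2}$. Trying to recover $n_l^{-1/2}$ for $g$ by a second-order argument does not help either, since $\|g'\|_{\mathrm{TV}}$ is in general comparable to $\|f'\|_{\mathrm{TV}}$ (think of $f'$ with a jump: $g''=f''-f''\!*\!\rho_\eta$ has total mass $\sim 2$).

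The paper avoids the split altogether. After reducing to bounded $f\in C^1$ with $f'\in \mathrm{BV}$ by a mollifier \emph{limit} (no parameter to optimize), it writes
\[
f(X^\epsilon)-f(\bar X^{\epsilon})+f(\bar X^{0})-f(X^0)
=\Big(\int_0^1\!\big[f'(\rho X^\epsilon+(1-\rho)\bar X^\epsilon)-f'(\rho X^0+(1-\rho)\bar X^0)\big]d\rho\Big)(X^\epsilon-\bar X^\epsilon)
\]
plus a term bounded by $\|f\|_{\mathrm{Lip}}\,\|X^\epsilon-\bar X^\epsilon+\bar X^0-X^0\|_2=O(\epsilon n_l^{-1/2})$. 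The explicit factor $(X^\epsilon-\bar X^\epsilon)$ supplies $n_l^{-1/2}$ directly, and Avikainen's inequality (Lemma~\ref{lemC1}) applied to $f'$ with the bounded-density hypothesis gives the remaining factor $\|f'(\cdot)-f'(\cdot)\|_{2p}\le C\epsilon^{r/(2p(r+1))}$; taking $p\downarrow 1$, $r\uparrow\infty$ yields $\epsilon^{(1-\delta)/2}$. In short: the $n_l^{-1/2}$ must be isolated \emph{before} any regularity of $f'$ is invoked, via the mean-value factorization, rather than recovered afterwards from an indicator estimate.
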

\begin{proof}
See \ref{sec:proofmlmc}.
\end{proof}

This theorem implies that the required computational cost turns out to be 
$O(\epsilon^{1-\delta} \gamma^{-2}(\log \gamma/\epsilon)^{2})$, 
with $L = \log(\epsilon \gamma^{-1}) / \log(k) + O(1)$ and $N_l = O(\epsilon^{1-\delta} \gamma^{-2} L n_l^{-1})$.

We now summarize strong rate of convergence 
for $\bar{P}_l - \bar{P}_{l-1}$ and $\bar{P}_l^{\mathrm{new}} - \bar{P}_{l-1}^{\mathrm{new}}$ 
in Table \ref{tabular_order}. 

\begin{table}[htb]
\begin{center}
\begin{tabular}{|c|c|c|}
\hline
Payoff & Standard E-M & Accelerated E-M \\
\hline\hline
$C^2_b$ & $O(n_l^{-1/2})$ & $O(\epsilon n_l^{-1/2})$ \\
\hline
Lipschitz & $O(n_l^{-1/2})$ & $O(\epsilon^{(1-\delta)/2} n_l^{-1/2})$ \\
\hline
Digital & $O(n_l^{-(1-\delta)/4})$, (\cite{A09}, \cite{GHM09}) & - \\
\hline
\end{tabular}
\caption{Strong rate of convergence of $\bar{P}_l - \bar{P}_{l-1}$ and $\bar{P}_l^{\mathrm{new}} - \bar{P}_{l-1}^{\mathrm{new}}$.}
\label{tabular_order}
\end{center}
\end{table}

\subsection{Localization for irregular payoffs}

The regularity of $f$ seems to be essential for the accelerated MLMC method introduced in previous. 
For example, we will see through computational experiments that the acceleration with discontinuous functions $f$ does not work so well.

We now propose a localization technique for this problem. 
Let us define a decomposition 
\[
f = f_s + f_{ir}
\]
where $f_s$ is a smooth (at least Lipschitz continuous) function with $f \approx f_s$. 
Then we apply the accelerated MLMC to the smooth part $f_s$ 
and the standard MLMC to the irregular part $f_{ir}$. 
In other words, we consider the MLMC method for 
\[
\bar{P}_l^{\mathrm{new}}(f_s) := f(\bar{X}_T^{\epsilon}) - f_s(\bar{X}_T^{0}) + E[f_s(X_T^{0})].
\]
The standard MLMC for discontinuous functions was studied in Avikainen \cite{A09} and Giles et al.\ \cite{GHM09}.

%%%%%%%%%%%%%%%%%%%%%%%%%%%%%%%%%%%%%%%%%%%%%%%%%%%%%%%%%%%%%%%%%%%%%%%%%%%%%%%%%%%%%%%%%%%%%%%%%%%%%%%%%%%%%%%%%%%%%
\section{Simulations}

\subsection{Numerical experiments for SABR model}
In this section, we want to study an estimator of $L^2$-norm 
$(\frac{1}{M} \sum_{j=1}^M ( X_T^{\epsilon, j} - \tilde{Y}_T^{\epsilon, (n), j} )^2 )^{1/2}$ for the SABR model. 
As a reference path, we use $\bar{X}_T^{\epsilon, (n_{\mathrm{ref}})}$ instead of $X_T^{\epsilon}$.  

We set the parameters as follows. 
\begin{itemize}
\item $S_0 = 100$, $\beta = 0.9$, $\alpha_0 = 0.16\times S_0^{2(1-\beta)}$, $\nu = 0.1$, $\rho = -0.6$, $T=1$
\item $n_{\mathrm{ref}} = 2^{14}$, $n=8, 16, 32, 64, 128, 256$. 
\end{itemize}
Here we considered a scaling for $\alpha_0$ (via $S_t \approx S_0^{1-\beta}S_t^\beta$).
The number of simulation $M$ for the test is $10^5$. 
The results are given in Figure \ref{fig2}. 
The accelerated scheme is faster than the standard method in both cases of $L^2$-error.

\begin{figure}[htb]
  \begin{center}
    \begin{tabular}{cc}
      \includegraphics[width=60mm]{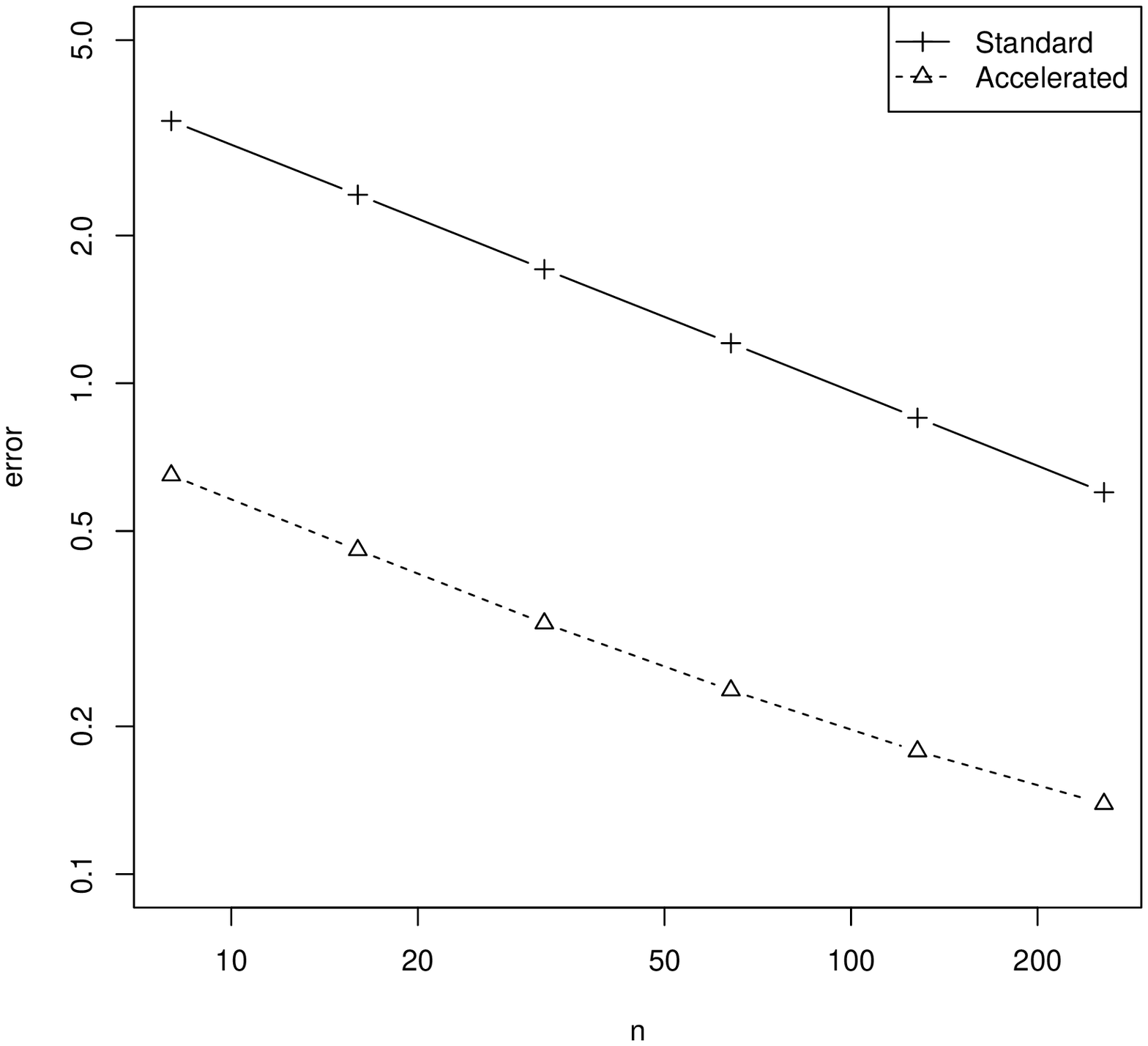} &
      \includegraphics[width=60mm]{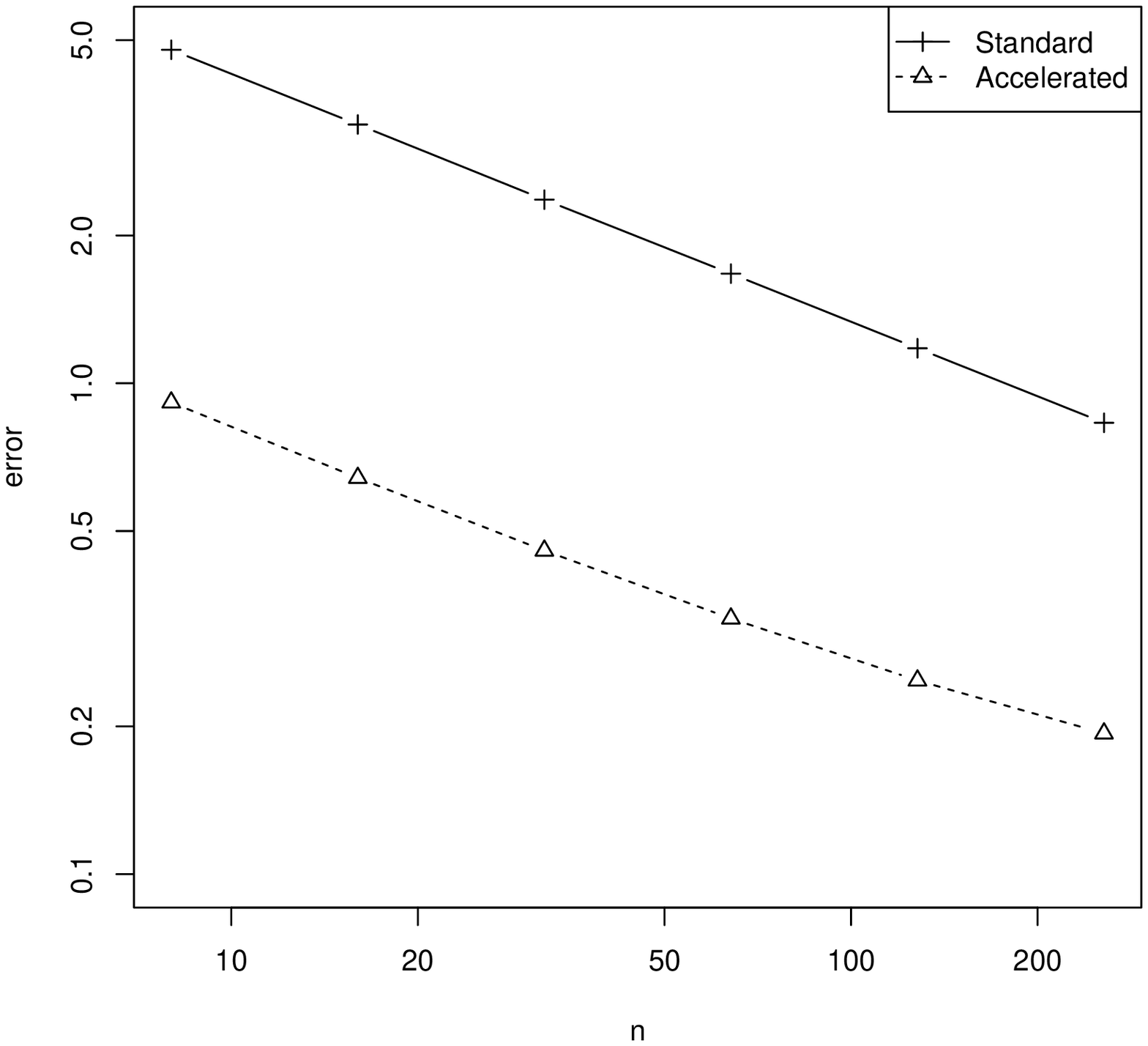} 
    \end{tabular}
    \caption{$L^2$-error : $E[|\bar{S}_T^{(n_{\mathrm{ref}})} - \tilde{Y}_T^{(n)}|^2]^{1/2}$ for the left and 
     $E[\max_{0\leq i \leq n_{\mathrm{ref}}}|\bar{S}_{t_i}^{(n_{\mathrm{ref}})} - \tilde{Y}_{t_i}^{(n)}|^2]^{1/2}$ for the right.}
    \label{fig2}
  \end{center}
\end{figure}

We next study the case with several $\nu$. 
Let us compute the $L^2$-error ratio for a random variable $Z$ which is defined as 
\[
\frac{E[|\bar{S}_T^{(n_{\mathrm{ref}})} - Z|^2]^{1/2}}{E[|\bar{S}_T^{(n_{\mathrm{ref}})} - \bar{S}_T^{(n)}|^2]^{1/2}} 
\times 100 \mbox{ (\%). }
\]
We fix the other parameters in the previous. 
In Figure \ref{fig3}, we can check the efficiency of the asymptotic method 
(only) when $\nu$ is small enough.

Finally we compare $\tilde{Y}$ and $\check{Y}$ with different $\beta$. 
Figure \ref{fig4} shows that the efficiency of $\check{Y}$ is very close to that of $\tilde{Y}$ as $\beta \approx 1$. 
Therefore if $\beta \approx 1$, we can apply the analytical tractability of $\check{Y}$ 
to pathwise simulation, computing expectations, or so on.

\begin{figure}[htbp]
  \begin{center}
    \includegraphics[width=75mm]{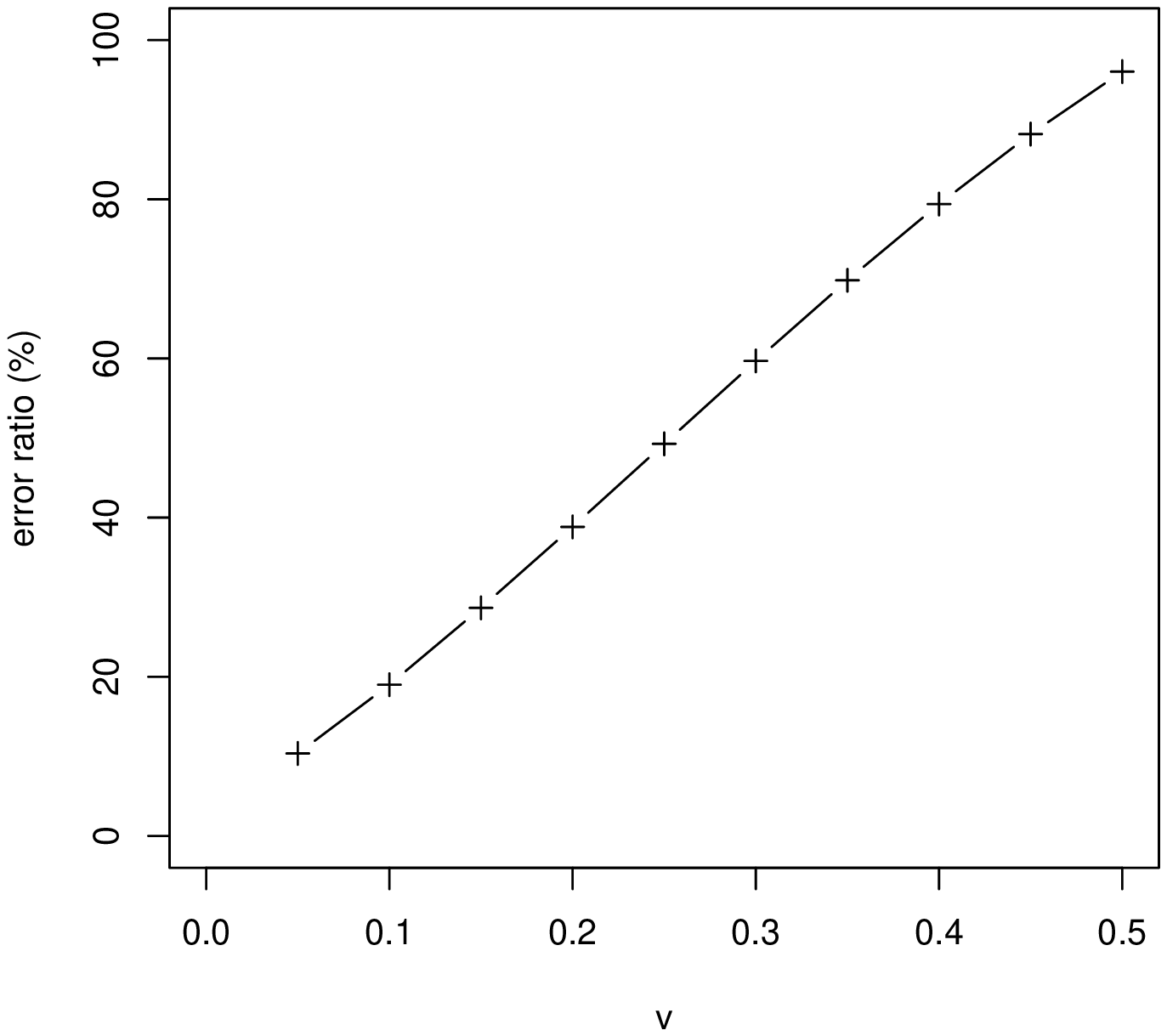} 
    \caption{$L^2$-error ratio for $\tilde{Y}$ with different $\nu$.}
    \label{fig3}
  \end{center}
  \begin{center}
    \includegraphics[width=75mm]{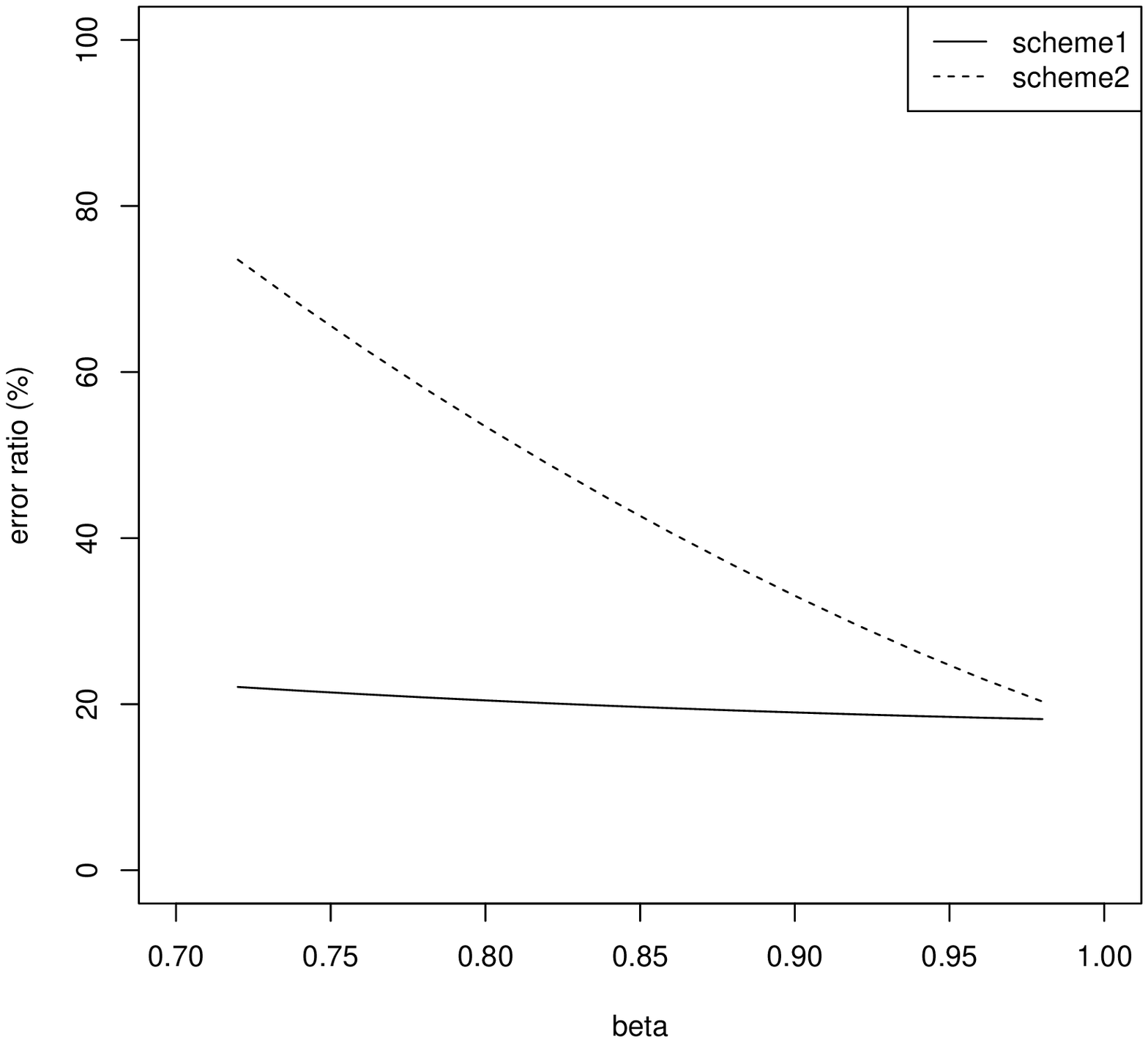} 
    \caption{$L^2$-error ratio for $\tilde{Y}$ (scheme$1$) and $\check{Y}$ (scheme$2$) with different $\beta$.}
    \label{fig4}
  \end{center}
\end{figure}

%%%%%%%%%%%%%%%%%%%%%%%%%%%%%%%%%%%%%%%
\subsection{Numerical tests for MLMC}
To show that the accelerated method is more efficient than the standard method with MLMC,  
we take a numerical test for 
$E[\bar{P}_l^{\mathrm{new}} - P] = O(\epsilon/n_l)$, 
and $\mathrm{Var}^{1/2}(\bar{P}_l^{\mathrm{new}} - \bar{P}_{l-1}^{\mathrm{new}}) = O(\epsilon/ n_l^{1/2})$ 
under the SABR model with small parameter $\nu$. 
Let us consider payoff functions (European and digital options)
$$f(x) = \max(0, x-100) \mbox{ or } f(x) = 1_{\{x-100 \geq 0\}}$$ 
and the parameters 
\begin{itemize}
\item $S_0 = 100$, $\beta = 1$, $\alpha_0 = 0.16$, $\nu = 0.1$, $\rho = -0.6$, $T=1$
\end{itemize}
The level structure of MLMC is given by $k=4$, i.e., $n_l = 4^l$. As a localization for digital option, we use 
\[
f_s(x) = (\max(x-100+h,0) - \max(x-100-h,0))/2h.
\]
Here we set $h=1.0$.

Figure \ref{figMLMC} and \ref{figMLMC_dig} show the numerical results. We used the number of simulation $M=10^7$ for the left, and $M=10^5$ for the right. 
The results basically imply that the accelerated method works better than the standard one as in preceding numerical experiments. 
Remarkably the accelerated method performs worse in the case of variance estimates for digital option, likely due to discontinuity of the payoff function.
In contrast, the localized scheme (Accelerated\_loc) performs better than the others to some extent.
We note that for general $1/2 \leq \beta <1$, the (semi-)analytical formula for CEV option pricing model 
can be used in order to compute $E[f(S_T^0)]$ (See \cite{Sch89}). 

\begin{figure}[htbp]
  \begin{center}
     \begin{tabular}{cc}
     \includegraphics[width=60mm]{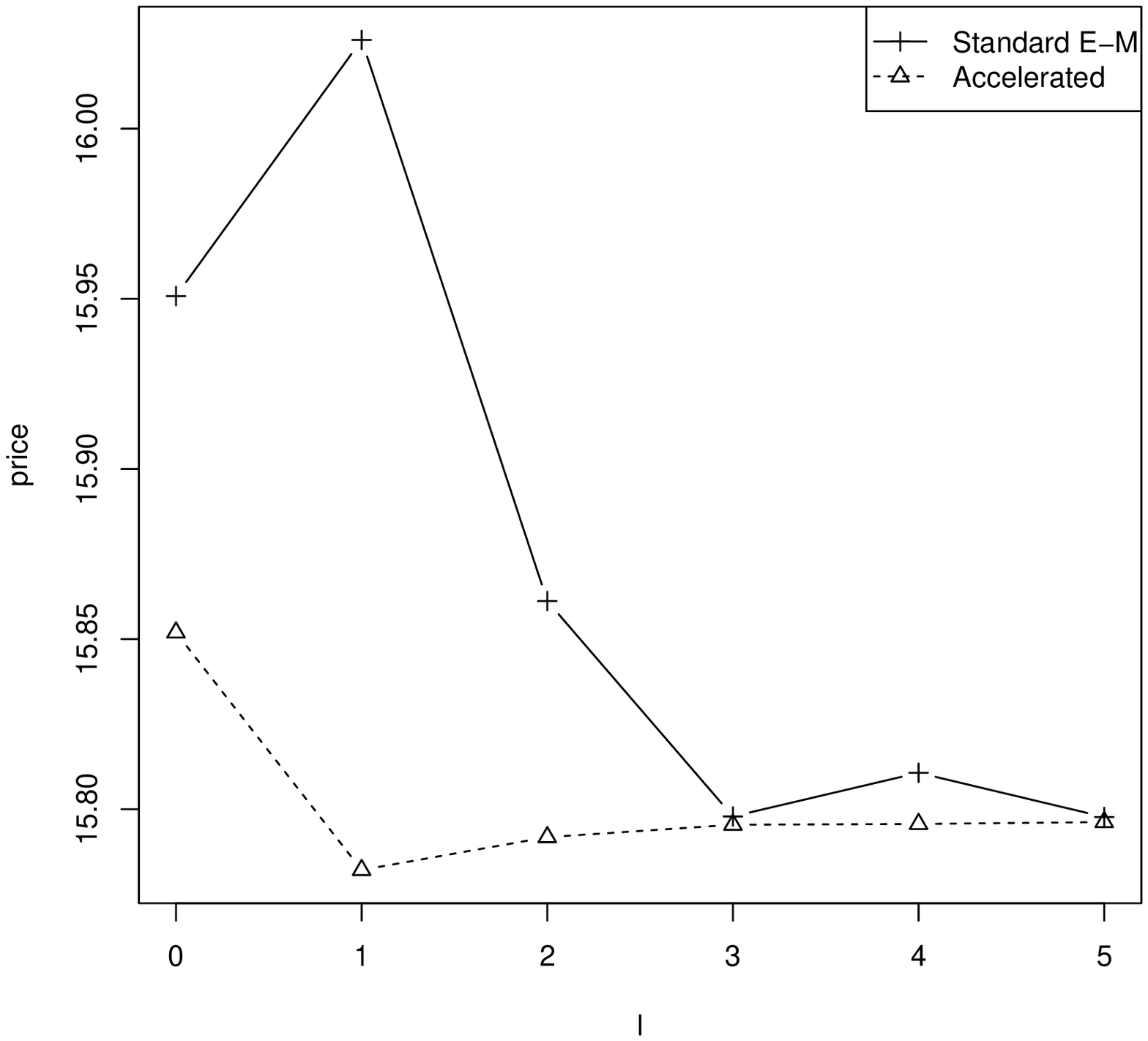}
     \includegraphics[width=60mm]{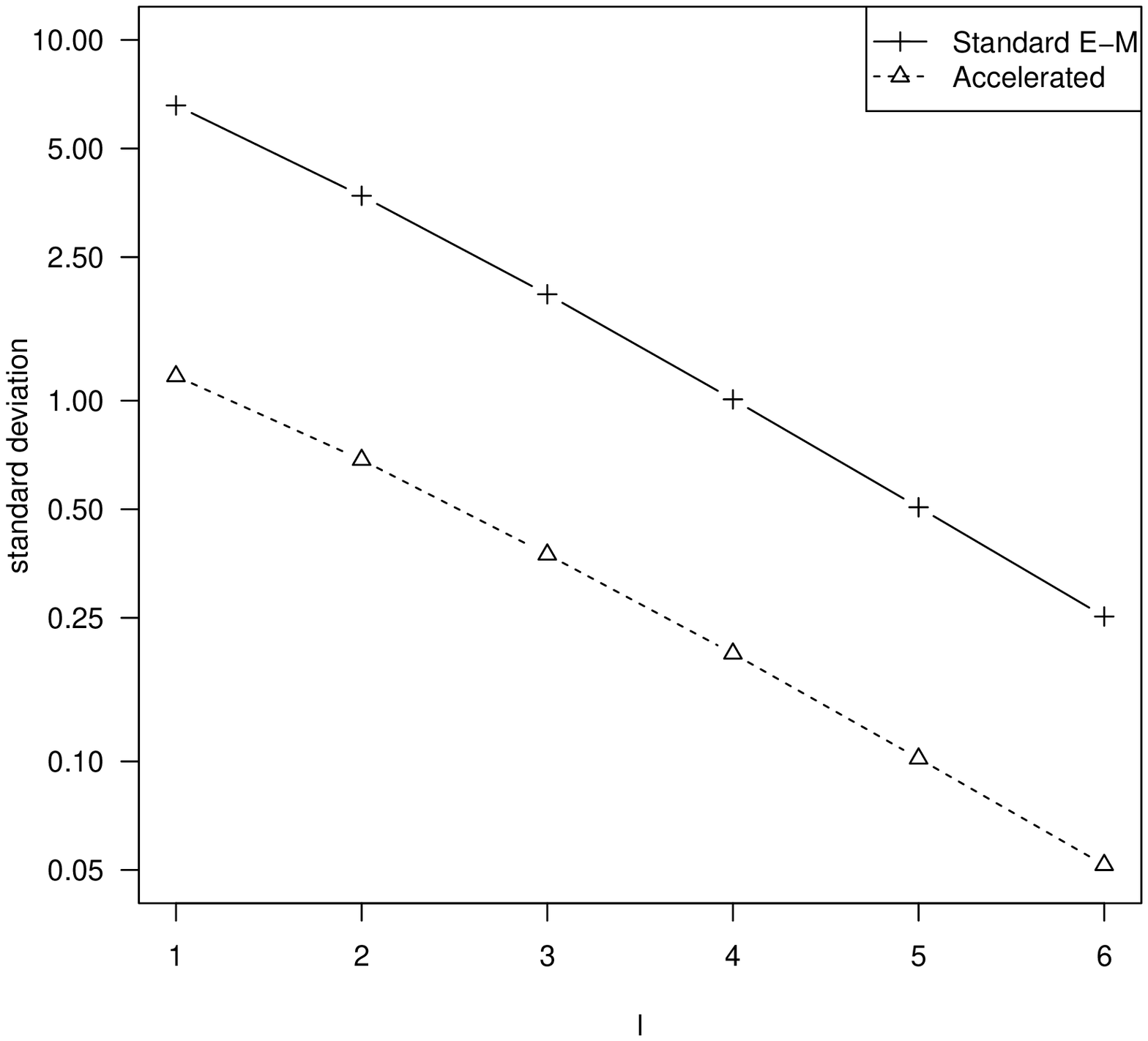}
     \end{tabular}
    \caption{European option: (Left) A comparison of weak convergence between $\bar{P}_l$ and $\bar{P}_l^{\mathrm{new}}$. 
    (Right) A comparison of standard deviation between $\bar{P}_l-\bar{P}_{l-1}$ and $\bar{P}_l^{\mathrm{new}} - \bar{P}_{l-1}^{\mathrm{new}}$. }
    \label{figMLMC}
  
  \vspace{5mm}
  
    \begin{tabular}{cc}
     \includegraphics[width=60mm]{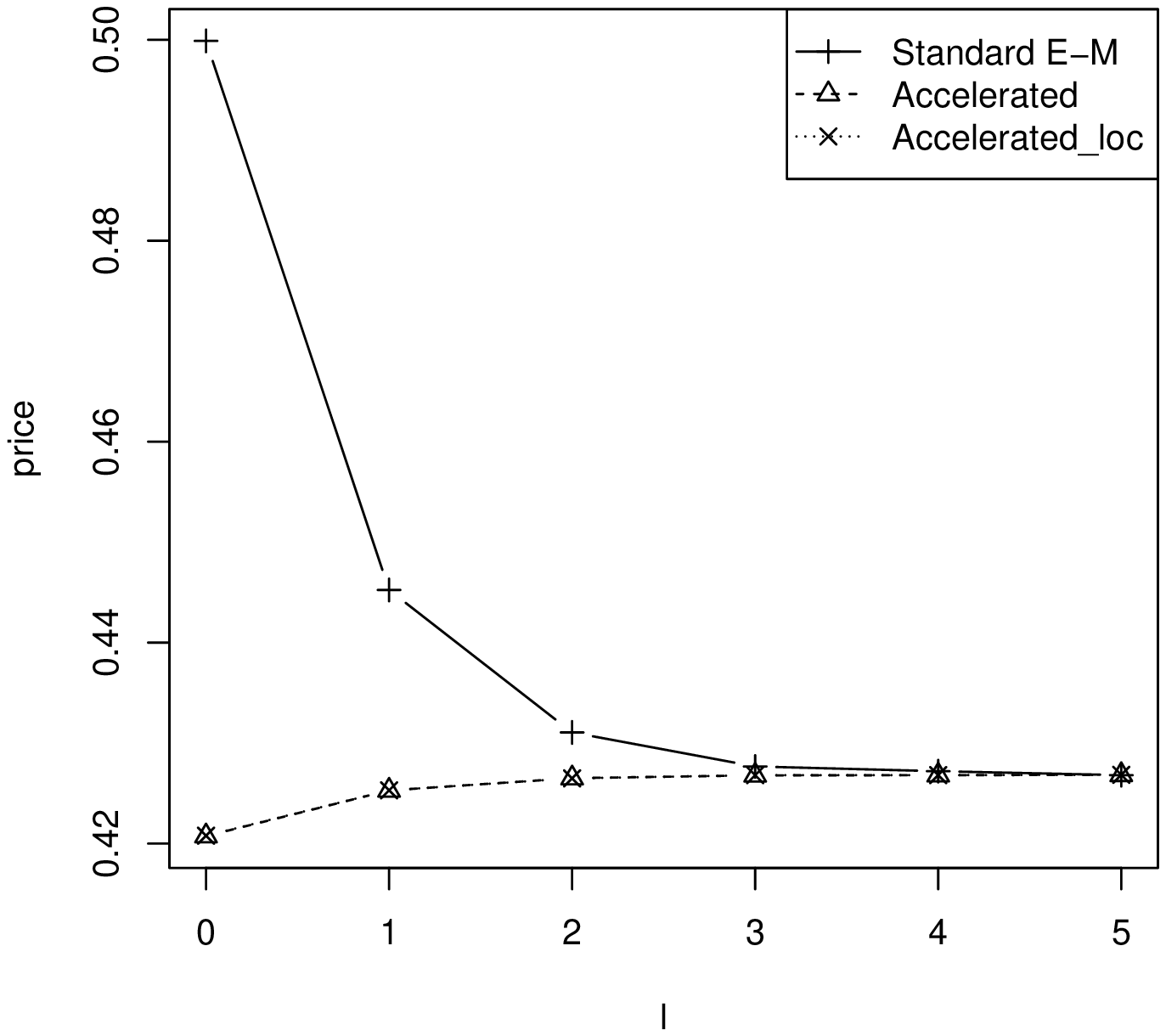}
     \includegraphics[width=60mm]{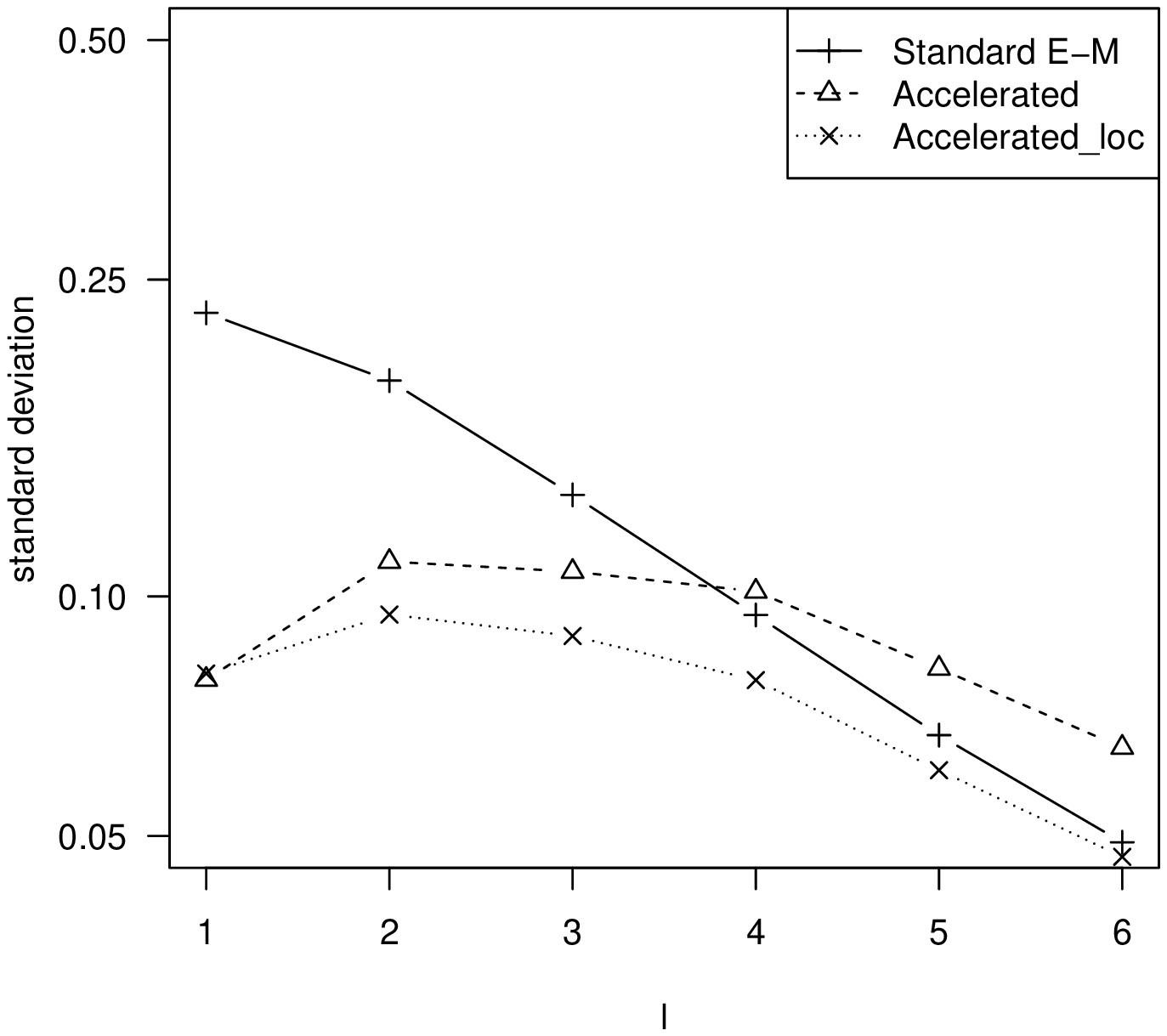}
     \end{tabular}
    \caption{Digital option: (Left) A comparison of weak convergence between $\bar{P}_l$ and $\bar{P}_l^{\mathrm{new}}$. 
    (Right) A comparison of standard deviation between $\bar{P}_l-\bar{P}_{l-1}$ and $\bar{P}_l^{\mathrm{new}} - \bar{P}_{l-1}^{\mathrm{new}}$. }
    \label{figMLMC_dig}
  \end{center}
\end{figure}

\section{Discussion: small extension}
By the accelerated method, the Euler-Maruyama scheme can be faster 
with a parameter $\epsilon$ small enough. 
But of course, if $\epsilon$ is not so small, 
the method does not work effectively (See Figure \ref{fig3}).

To improve the efficiency of the method, we consider a natural extended scheme 
$$\bar{F}^\epsilon - \rho_\epsilon(\bar{F}^0 - F^0)$$
where $\rho_\epsilon = 1 + O(\epsilon)$ to keep the rate of convergence. 
For example, the optimal $\rho_\epsilon$ with respect to $L^2$-error is given by 
\begin{align*}
\rho_\epsilon(L^2) &:= \arg\min_\rho E[(F^\epsilon - \bar{F}^\epsilon + \rho(\bar{F}^0 - F^0) )^2]
\\ &= \frac{E[(F^\epsilon - \bar{F}^\epsilon)(F^0 - \bar{F}^0)]}{E[(\bar{F}^0 - F^0)^2]}
\end{align*}
However, since the value of $F^\epsilon$ is unknown, 
we can not estimate $\rho_\epsilon(L^2)$ directly. 
In general it does not seem easy to choose an appropriate and computable $\rho_\epsilon$. 
This issue is left for future work.

%%%%%%%%%%%%%%%%%%%%%%%%%%%%%%%%%%%%%%%%%%%%%%%%%%%%%%%%%%%%%%%%%%%%%%%%%%%%%%%%%%%%%%%%%%%%%%%%%%%%%%%%%%%%%%%%%%%%%%%%%%%%%%%%%%%%%
%%%%%%%%%%%%%%%%%%%%%%%%%%%%%%%%%%%%%%%%%%%%%%%%%%%%%%%%%%%%%%%%%%%%%%%%%%%%%%%%%%%%%%%%%%%%%%%%%%%%%%%%%%%%%%%%%%%%%%%%%%%%%%%%%%%%%
\appendix

\newtheorem{Alem}{Lemma}[section]
\renewcommand{\theAlem}{\Alph{section}\arabic{Alem}}

\section{Proof of Theorem \ref{firsttheorem} and \ref{secondtheorem}}

We use the following notations.
\begin{itemize}
\item $\eta(s) := t_i$ if $s \in [t_i, t_{i+1})$.
\item $\bar{X}_t^{\epsilon} \equiv \bar{X}_t^{\epsilon, (n)}$, $\hat{X}_t^{\epsilon} \equiv \hat{X}_t^{\epsilon, (n)}$.
\end{itemize}
We will apply the Burkholder-Davis-Gundy (BDG) inequality 
\[
c_p E[\langle M \rangle_T^{p/2}] \leq E[\sup_{0\leq t \leq T}|M_t|^p] \leq C_p E[\langle M \rangle_T^{p/2}]
\]
to the proofs below:
Here $p >0$ and $M_t$ is a continuous local martingale.

Using the BDG inequality and Gronwall inequality, 
we can show the following moment estimates. 
(See \cite{KP92} for the proof in the case of $L^2$-norm.)
\begin{Alem}\label{lemB1}
(i) Suppose that the assumptions $(H_1)$-$(H_2)$ hold. Then for any $p \geq 2$, we have
\begin{align*}
\sup_{\epsilon \in [0,1]} E[\sup_{0\leq t \leq T}|X_t^\epsilon|^p] 
+ \sup_{\epsilon \in [0,1]} E[\sup_{0\leq t \leq T}|\bar{X}_t^\epsilon|^p] 
< \infty, 
\\ 
\sup_{\epsilon \in [0,1]} \max_{0 \leq i \leq n-1}E[\sup_{t_i\leq t \leq t_{i+1}}|X_t^\epsilon - X_{t_i}^\epsilon|^p] \leq C(T, x_0, p) / n^{1/2}. 
\\ 
\sup_{\epsilon \in [0,1]} E[\sup_{0\leq t \leq T}|X_t^\epsilon - \bar{X}_t^\epsilon|^p] \leq C(T, x_0, p) / n^{1/2}. 
\end{align*}

(ii) Suppose that the assumptions $(H'_1)$-$(H'_2)$ hold. Then for any $p \geq 2$, we have
\begin{align*}
\sup_{\epsilon \in [0,1]} E[\sup_{0\leq t \leq T}|\hat{X}_t^\epsilon|^p] < \infty, 
\\
\sup_{\epsilon \in [0,1]} E[\sup_{0\leq t \leq T}|X_t^\epsilon - \hat{X}_t^\epsilon|^p] \leq C(T, x_0, p) / n. 
\end{align*}
\end{Alem}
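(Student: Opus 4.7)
The lemma assembles three families of standard estimates (uniform moments, short-interval oscillation, strong rate) for the continuous SDE and its two discretizations. The strategy in every part is the same: write the quantity of interest as a sum of a Lebesgue-in-time integral plus an Itô integral, take $\sup_t$ and the $p$-th moment, apply Hölder on the drift term and the Burkholder–Davis–Gundy inequality on the diffusion term, and close the resulting integral inequality with Gronwall. All assumptions are uniform in $\epsilon \in [0,1]$, so the suprema over $\epsilon$ will come out automatically; I will therefore suppress $\epsilon$ from the notation in the sketch below.

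\textbf{Step 1: uniform moment bounds (i, first line).} For the continuous process, I would start from
$X_t = x_0 + \int_0^t b(X_s,\epsilon)\,ds + \int_0^t \sigma(X_s,\epsilon)\,dB_s$, take $\sup_{0\le s\le t}$ and the $p$-th power, apply Hölder to the drift and BDG to the stochastic integral, and then invoke the linear growth assumption $(H_1)$ to bound both by $C\int_0^t (1+E[\sup_{u\le s}|X_u|^p])\,ds$. Gronwall closes the loop. For $\bar X$, one treats it as the Itô process
$\bar X_t = x_0 + \int_0^t b(\bar X_{\eta(s)},\epsilon)\,ds + \int_0^t \sigma(\bar X_{\eta(s)},\epsilon)\,dB_s$
and repeats the same argument verbatim, noting that $|\bar X_{\eta(s)}|\le \sup_{u\le s}|\bar X_u|$ so $(H_1)$ still drives a Gronwall-type bound.

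\textbf{Step 2: short-interval oscillation (i, second line).} For $t\in[t_i,t_{i+1}]$,
$X_t-X_{t_i}=\int_{t_i}^t b\,ds+\int_{t_i}^t \sigma\,dB_s$. I would apply Hölder to the drift piece, giving a factor $(T/n)^{p-1}$, and BDG to the stochastic piece, giving a factor $(T/n)^{p/2-1}$, then use Step 1 together with $(H_1)$. The dominant term is the diffusion, producing the $L^p$-order $n^{-1/2}$ claimed (equivalently $E[\cdots]=O(n^{-p/2})$), uniformly in $i$ and $\epsilon$.

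\textbf{Step 3: strong convergence of Euler–Maruyama (i, third line).} Write
$X_t-\bar X_t = \int_0^t \big(b(X_s,\epsilon)-b(\bar X_{\eta(s)},\epsilon)\big)\,ds + \int_0^t \big(\sigma(X_s,\epsilon)-\sigma(\bar X_{\eta(s)},\epsilon)\big)\,dB_s$ and split each integrand as $b(X_s)-b(X_{\eta(s)})+b(X_{\eta(s)})-b(\bar X_{\eta(s)})$ (same for $\sigma$). The Lipschitz assumption $(H_2)$ turns the second piece into $C|X_{\eta(s)}-\bar X_{\eta(s)}|\le C\sup_{u\le s}|X_u-\bar X_u|$. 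The first piece is Lipschitz in $X_s-X_{\eta(s)}$, which Step 2 controls by $Cn^{-1/2}$ in $L^p$. Applying BDG and taking $\sup_{u\le t}$ then yields an inequality of the form
$E[\sup_{u\le t}|X_u-\bar X_u|^p]\le Cn^{-p/2}+C\int_0^t E[\sup_{u\le s}|X_u-\bar X_u|^p]\,ds$, and Gronwall finishes.

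\textbf{Step 4: Milstein (part ii).} The moment bound on $\hat X$ follows from the same template as Step 1, after observing that $\hat X_t$ can be written as an Itô integral whose (piecewise-frozen) coefficients include $\sigma\sigma'(\hat X_{\eta(s)},\epsilon)(B_s-B_{\eta(s)})$; the linear growth of this extra term is controlled by $(H_1')$ together with the uniform bound $E[\sup_{s\le t}|B_s-B_{\eta(s)}|^{2p}]\le C n^{-p}$, and one closes as before with Gronwall. For the $L^p$-rate $n^{-1}$, I would use the Itô–Taylor expansion of $X_t$ to one order higher than for Euler–Maruyama: on $[t_i,t_{i+1}]$, expand $\sigma(X_s)$ around $X_{t_i}$ by Itô's formula to produce the Milstein correction $\sigma\sigma'(X_{t_i})\int_{t_i}^s dB_r$ plus an explicit remainder. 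Subtracting $\hat X_t$, the main terms containing $\sigma\sigma'(X_{t_i})-\sigma\sigma'(\hat X_{t_i})$ are handled via the Lipschitz property in $(H_2')$ and Gronwall, while the remainder is bounded directly using $(H_1')$--$(H_2')$ and the moment bounds.

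\textbf{Anticipated difficulty.} The analytic content of every step is routine. The one place where care is required is Step 4: in the Milstein remainder one encounters iterated integrals such as $\int_{t_i}^t\int_{t_i}^s \sigma^2\sigma''(X_r,\epsilon)\,dB_r\,dB_s$ and $\int_{t_i}^t\int_{t_i}^s b\sigma'(X_r,\epsilon)\,dr\,dB_s$, and one must show each is $O(n^{-2})$ in $L^p$ uniformly in $i$; this is precisely what $(H_1')$, combined with repeated applications of BDG and the moment bound on $\hat X$, delivers, but the bookkeeping is the only step that is not essentially automatic.
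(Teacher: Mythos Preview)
Your approach is correct and coincides with the paper's: the paper does not actually prove this lemma but simply asserts that the estimates follow from the BDG and Gronwall inequalities, referring to Kloeden--Platen \cite{KP92} for the $L^2$ case. One minor correction to your Step~4: the local iterated integrals on $[t_i,t_{i+1}]$ such as $\int_{t_i}^t\int_{t_i}^s\sigma^2\sigma''(X_r,\epsilon)\,dB_r\,dB_s$ are only $O(n^{-1})$ in $L^p$-norm per subinterval, not $O(n^{-2})$; the global rate $n^{-1}$ is obtained not by summing local errors but by writing the full remainder as a single stochastic integral over $[0,t]$ (with integrand of $L^p$-size $O(n^{-1})$) and applying BDG once, exactly as you do in Step~3.
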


We now give an important lemma for the proof of the main theorems.
\begin{Alem}\label{lemB2}
(i) Under $(H_1)$-$(H_3)$, we have for any $p \geq 2$, 
\begin{align}\label{B2_1}
E\Big[\sup_{0\leq t \leq T}| X_{t}^\epsilon - X_t^{0} |^p \Big]^{1/p} 
\leq C(T, x_0, p) \epsilon,
\end{align}
and 
\begin{align}\label{B2_2}
\max_{0 \leq i \leq n-1}E\Big[\sup_{t_i\leq t \leq t_{i+1}}| X_{t}^\epsilon - ( X_{t_i}^{\epsilon} - X_{t_i}^{0} + X_t^{0}) |^p \Big]^{1/p} 
\leq C(T, x_0, p) \frac{\epsilon}{n^{1/2}}.
\end{align}

(ii) Under $(H_1)$-$(H_3)$, for any $p \geq 2$,
\begin{align*}%\label{B2_3}
E\Big[\sup_{0\leq t \leq T}| \bar{X}_{t}^\epsilon - \bar{X}_t^{0} |^p \Big]^{1/p} 
\leq C(T, x_0, p) \epsilon.
\end{align*}

(iii) Under $(H'_1)$-$(H'_3)$, for any $p \geq 2$,
\begin{align*}%\label{B2_4}
E\Big[\sup_{0\leq t \leq T}| \hat{X}_{t}^\epsilon - \hat{X}_t^{0} |^p \Big]^{1/p} 
\leq C(T, x_0, p) \epsilon.
\end{align*}
\end{Alem}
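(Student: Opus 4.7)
\medskip\noindent
\textbf{Proof plan for Lemma \ref{lemB2}.}

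The approach is to treat each of the three inequalities by writing the difference $X_t^\epsilon - X_t^0$ (or the discrete analogue) as an It\^o integral, splitting the coefficient increments into an $x$-Lipschitz piece and an $\epsilon$-perturbation piece via $(H_2)$ and $(H_3)$, controlling the stochastic integral in $L^p$ by the Burkholder--Davis--Gundy inequality, and finally closing the estimate with Gronwall's lemma. All moment bounds on $X^\epsilon, \bar X^\epsilon, \hat X^\epsilon$ needed on the way are supplied by Lemma \ref{lemB1}.

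For (i), set $Z_t := X_t^\epsilon - X_t^0$, so
\[
Z_t = \int_0^t \bigl(b(X_s^\epsilon,\epsilon)-b(X_s^0,0)\bigr)\,ds + \int_0^t \bigl(\sigma(X_s^\epsilon,\epsilon)-\sigma(X_s^0,0)\bigr)\,dB_s .
\]
Split $b(X_s^\epsilon,\epsilon)-b(X_s^0,0) = [b(X_s^\epsilon,\epsilon)-b(X_s^0,\epsilon)] + [b(X_s^0,\epsilon)-b(X_s^0,0)]$, bound the first bracket by $C|Z_s|$ via $(H_2)$ and the second by $C\epsilon(1+|X_s^0|)$ via $(H_3)$, and do the same for $\sigma$. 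Applying BDG to the stochastic integral of the sup and using $\sup_\epsilon\|\sup_t|X_t^0|\|_p<\infty$ from Lemma \ref{lemB1}(i), one obtains
\[
E\bigl[\sup_{u\leq t}|Z_u|^p\bigr] \leq C\epsilon^p + C\int_0^t E\bigl[\sup_{u\leq s}|Z_u|^p\bigr]\,ds,
\]
and Gronwall gives (\ref{B2_1}). For (\ref{B2_2}), the difference inside the supremum equals the same integrands but integrated from $t_i$ to $t$; the drift piece is bounded in $L^p$ by $(t-t_i)\cdot C\epsilon \leq C\epsilon/n$ using (\ref{B2_1}) to absorb $\|Z_s\|_p$, while BDG applied to the stochastic integral over an interval of length $1/n$ gives an $L^p$ bound of order $(1/n)^{1/2}\cdot C\epsilon$. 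Combining yields the $\epsilon/n^{1/2}$ rate.

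For (ii) and (iii), the argument is essentially identical after replacing $X^\epsilon$ by $\bar X^\epsilon$ (resp.\ $\hat X^\epsilon$) and evaluating the coefficients at $\bar X_{\eta(s)}^\epsilon$ (resp.\ at $\hat X_{\eta(s)}^\epsilon$, plus the Milstein correction $\tfrac12\sigma\sigma'(\hat X_{\eta(s)}^\epsilon,\epsilon)((B_s-B_{\eta(s)})^2-(s-\eta(s)))$). In the Milstein case the extra term is handled by the analogous split of $\sigma\sigma'$ with $(H_1')$--$(H_3')$, using that the Brownian increment factor has bounded $L^p$-moments of order $1/n$. The usual Lipschitz-plus-$\epsilon$ splitting, together with Lemma \ref{lemB1}(i),(ii), produces the Gronwall-type inequality
\[
E\bigl[\sup_{u\leq t}|\bar Z_u|^p\bigr] \leq C\epsilon^p + C\int_0^t E\bigl[\sup_{u\leq s}|\bar Z_u|^p\bigr]\,ds
\]
for $\bar Z_t := \bar X_t^\epsilon - \bar X_t^0$ (and similarly for $\hat Z$), which yields the stated $O(\epsilon)$ bound.

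The main obstacle is purely bookkeeping: one must be careful that the $\epsilon$-perturbation piece coming from $(H_3)$ (respectively $(H_3')$) multiplies a factor $1+|X_s^0|$ (or the corresponding discrete version) whose $L^p$-norm is uniformly bounded in $\epsilon$ and $n$. Once Lemma \ref{lemB1} is invoked to secure these uniform moment bounds, the rest is a standard BDG/Gronwall iteration, and the only delicate point in (\ref{B2_2}) is to use (\ref{B2_1}) \emph{before} applying BDG on the short interval $[t_i,t_{i+1}]$, so that $\epsilon$ (rather than $1$) appears as the prefactor of $1/n^{1/2}$.
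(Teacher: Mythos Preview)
Your plan is correct and matches the paper's own argument essentially line for line: write the difference as an It\^o integral, split the coefficient increments via $(H_2)$/$(H_3)$ (or their primed versions) into a Lipschitz-in-$x$ part and an $O(\epsilon)$ part, apply BDG plus the moment bounds of Lemma~\ref{lemB1}, and close with Gronwall; for (\ref{B2_2}) the paper likewise integrates the same integrands over $[t_i,t]$ and invokes (\ref{B2_1}) first, exactly as you describe. The only minor imprecision is your remark that in (iii) the Milstein correction has ``$L^p$-moments of order $1/n$'': that extra smallness is not needed here---what matters is simply that $\sigma\sigma'$ admits the same Lipschitz-plus-$\epsilon$ split via $(H_2')$/$(H_3')$, so the correction term feeds into the same Gronwall inequality with an $O(\epsilon^p)$ inhomogeneity.
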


\begin{proof}
(i): 
We first note that
\[
X_t^\epsilon - X_t^0 = 
\int_0^t(b(X_s^\epsilon, \epsilon) - b(X_s^0, 0))ds
+ \int_0^t(\sigma(X_s^\epsilon, \epsilon) - \sigma(X_s^0, 0))dBs, 
\]
and by the BDG inequality for the stochastic integral term, 
\begin{align*}
E[\sup_{0\leq s \leq t}|X_s^\epsilon - X_s^0|^p] \leq C_p & 
\Big(
E\Big[\big(\int_0^t|b(X_s^\epsilon, \epsilon) - b(X_s^0, 0)|ds\big)^p\Big] 
\\ & + E\Big[ \big(\int_0^t(\sigma(X_s^\epsilon, \epsilon) - \sigma(X_s^0, 0))^2ds \big)^{p/2} \Big] 
\Big).
\end{align*}
Using the conditions $(H_1)$-$(H_3)$ for the above, we have immediately 
\begin{align*}
G(t) := E[\sup_{0\leq s \leq t}| X_s^\epsilon - X_s^0 |^p] 
\leq C_1 \epsilon^p + C_2 \int_0^t G(s) ds.
\end{align*}
Here the constants $C_1$ and $C_2$ do not depend on $\epsilon$. 
Thus from the Gronwall inequality we obtain (\ref{B2_1}).

We next consider the second result (\ref{B2_2}). 
Since
\[
X_{t}^\epsilon - ( X_{\eta(t)}^{\epsilon} - X_{\eta(t)}^{0} + X_t^{0}) 
= \int_{\eta(t)}^t (b(X_s^\epsilon, \epsilon) - b(X_s^0, 0)) ds
+ \int_{\eta(t)}^t (\sigma(X_s^\epsilon, \epsilon) - \sigma(X_s^0, 0)) dB_s, 
\]
the inequality (\ref{B2_2}) follows from $(H_2)$-$(H_3)$ and (\ref{B2_1}). 

The proofs for (ii) and (iii) are straightforward as in (\ref{B2_1}). 
\end{proof}

The following lemma will be used such as the Lipschitz continuous property.
\begin{Alem}\label{lemB3}
(i) 
Assume that $(H_1)$-$(H_4)$ hold. Then 
\begin{align*}
&|b(x_1, \epsilon) - b(y_1, \epsilon) + b(y_2, 0) - b(x_2, 0)| 
\\ &\leq C ((\epsilon + |x_1-x_2| + |y_1-y_2|)(x_1-y_1) + |x_1 - y_1 + y_2 - x_2|).
\\ &|\sigma(x_1, \epsilon) - \sigma(y_1, \epsilon) + \sigma(y_2, 0) - \sigma(x_2, 0)| 
\\ &\leq C ((\epsilon + |x_1-x_2| + |y_1-y_2|)(x_1-y_1) + |x_1 - y_1 + y_2 - x_2|).
\end{align*}

(ii) 
Assume that $(H'_1)$-$(H'_4)$ hold. Then 
\begin{align*}
& |\sigma\sigma'(x_1, \epsilon) - \sigma\sigma'(y_1, \epsilon) + \sigma\sigma'(y_2, 0) - \sigma\sigma'(x_2, 0)| 
\\ & \leq C ((\epsilon + |x_1-x_2| + |y_1-y_2|)(x_1-y_1) + |x_1 - y_1 + y_2 - x_2|).
\end{align*}
\end{Alem}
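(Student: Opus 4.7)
The plan is to write each of the four-term quantities $g(x_1,\epsilon)-g(y_1,\epsilon)+g(y_2,0)-g(x_2,0)$ (with $g = b, \sigma,$ or $\sigma\sigma'$) as a sum of two pieces:
\begin{align*}
A &:= g(x_1,\epsilon)-g(y_1,\epsilon) - g(x_1,0) + g(y_1,0), \\
B &:= g(x_1,0)-g(y_1,0) + g(y_2,0) - g(x_2,0).
\end{align*}
The term $A$ isolates the $\epsilon$-dependence of the first difference, while $B$ is a purely $\epsilon=0$ quantity of exactly the form treated in Lemma~\ref{lemLIP}. The hypotheses $(H_4)$ (resp.\ $(H_4')$) are designed precisely so that these two pieces can be estimated separately.

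For $A$, since $g(\cdot,\epsilon)\in C^1$ by $(H_4)$ (resp.\ $(H_4')$), I would write
\[
A = \int_0^1 \bigl[\partial g(y_1+t(x_1-y_1),\epsilon) - \partial g(y_1+t(x_1-y_1),0)\bigr]\,dt \cdot (x_1-y_1)
\]
and invoke $(H_4)$ with $x=y = y_1+t(x_1-y_1)$ to get $|\partial g(\cdot,\epsilon)-\partial g(\cdot,0)|\le C\epsilon$, yielding $|A|\le C\epsilon|x_1-y_1|$. For $B$, I would apply $g(\cdot,0)$ to the identity behind Lemma~\ref{lemLIP}: write $B = [h(x_1)-h(y_1)] - [h(x_2)-h(y_2)]$ with $h:=g(\cdot,0)$, integrate along the straight lines from $y_i$ to $x_i$, then split
\[
B=\int_0^1\!\!\bigl[\partial h(y_1{+}t(x_1{-}y_1))-\partial h(y_2{+}t(x_2{-}y_2))\bigr]\,dt\,(x_1{-}y_1)\,+\,\int_0^1\!\!\partial h(y_2{+}t(x_2{-}y_2))\,dt\,\bigl((x_1{-}y_1)-(x_2{-}y_2)\bigr).
\]
The second integral is bounded by $\|\partial h\|_\infty |x_1-y_1+y_2-x_2|$, using $(H_2)$ to know $\partial h$ is bounded. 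For the first integral, $(H_4)$ with $\epsilon=0$ says $\partial h$ is Lipschitz, so the integrand is controlled by $C|(y_1-y_2)+t((x_1-y_1)-(x_2-y_2))|\le C((1-t)|y_1-y_2|+t|x_1-x_2|)$, giving a bound $C(|x_1-x_2|+|y_1-y_2|)|x_1-y_1|$ after integrating in $t$. Summing $A$ and $B$ produces exactly the claimed inequality.

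For part (ii), I would apply the same two-step decomposition to $g:=\sigma\sigma'$. Here one uses $(H_2')$ to see that $\sigma\sigma'$ has bounded Lipschitz constant (so its derivative, which exists since $\sigma\in C^2$ by the standing smoothness assumption of that subsection, is bounded), and $(H_4')$ to control both $\partial g(\cdot,\epsilon)-\partial g(\cdot,0)$ (for $A$) and the Lipschitz property of $\partial g(\cdot,0)$ (for $B$). The proof is otherwise verbatim.

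The only mild obstacle is bookkeeping: one has to be careful to use the \emph{right} linear interpolations so that the second-order remainder is measured by $|x_1-x_2|+|y_1-y_2|$ rather than some other combination; the choice above (parametrizing from $y_i$ towards $x_i$ in each difference) gives the correct pairing. No new analytic idea beyond Lemma~\ref{lemLIP} and the mean value theorem is required.
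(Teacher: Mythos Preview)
Your proof is correct and follows essentially the same idea as the paper's: apply the mean value theorem to each first-order difference and then control the difference of the resulting derivative averages via $(H_4)$. The only cosmetic distinction is that the paper does this in one stroke---writing $g(x_1,\epsilon)-g(y_1,\epsilon)+g(y_2,0)-g(x_2,0)=(\xi_{x_1,y_1}^\epsilon-\xi_{x_2,y_2}^0)(x_1-y_1)+\xi_{x_2,y_2}^0(x_1-y_1+y_2-x_2)$ with $\xi_{x,y}^\epsilon:=\int_0^1\partial g(\rho x+(1-\rho)y,\epsilon)\,d\rho$ and invoking the full $(H_4)$ bound $|\partial g(x,\epsilon)-\partial g(y,0)|\le C(\epsilon+|x-y|)$ on the first factor---whereas you first peel off the $\epsilon$-variation as a separate piece $A$ and then handle the purely $\epsilon=0$ remainder $B$; the estimates and the hypotheses used are identical.
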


\begin{proof}
We only prove for $b$. By the mean value theorem, 
\begin{align*}
b(x_1, \epsilon) - b(y_1, \epsilon) + b(y_2, 0) - b(x_2, 0) = \xi_{x_1, y_1}^\epsilon(x_1-y_1) + \xi_{y_2, x_2}^0(y_2-x_2)
\end{align*}
where $\xi_{x, y}^\epsilon := \int_0^1 \partial b (\rho x + (1-\rho) y , \epsilon)d\rho = \xi_{y, x}^\epsilon$. 
Taking the difference again in the right hand side, we have
\begin{align*}
\xi_{x_1, y_1}^\epsilon(x_1-y_1) = (\xi_{x_1, y_1}^\epsilon - \xi_{x_2, y_2}^0)(x_1-y_1) + \xi_{y_2, x_2}^0(x_1-y_1).
\end{align*}
Finally, using the assumption $(H_4)$ for $(\xi_{x_1, y_1}^\epsilon - \xi_{x_2, y_2}^0)$, we obtain the result.
\end{proof}

Now we shall prove the theorems. 

\begin{proof}[Proof of Theorem \ref{firsttheorem}]
Let us define 
\begin{align*}
G_1(t) := E[\sup_{0\leq s \leq t} |X_s^{\epsilon} - \bar{Y}_s^{\epsilon, (n)}|^p].
\end{align*}
By using the Gronwall inequality, our goal becomes to show the following:
\begin{align*}
G_1(t) \leq C_1\frac{\epsilon^p}{n^{p/2}} + C_2 \int_0^t G_1(s)ds
\end{align*}
where $C_1$ and $C_2$ depend only on $T, x_0, p$.

We now compute 
\[
X_t^{\epsilon} - \bar{Y}_t^{\epsilon, (n)} = e^\epsilon(t) + \bar{e}^\epsilon(t)
\]
where 
\begin{align*}
e^\epsilon(t) =& \int_0^t (b(X_{\eta(s)}^\epsilon, \epsilon) - b(\bar{X}_{\eta(s)}^\epsilon, \epsilon) 
+ b(\bar{X}_{\eta(s)}^0, 0) - b(X_{\eta(s)}^0, 0)) ds 
\\ & + \int_0^t (\sigma(X_{\eta(s)}^\epsilon, \epsilon) - \sigma(\bar{X}_{\eta(s)}^\epsilon, \epsilon) 
+ \sigma(\bar{X}_{\eta(s)}^0, 0) - \sigma(X_{\eta(s)}^0, 0)) dB_s
\end{align*}
and
\begin{align*}
\bar{e}^\epsilon(t) =& \int_0^t (b(X_{s}^\epsilon, \epsilon) - b(X_{\eta(s)}^\epsilon, \epsilon)
+ b(X_{\eta(s)}^0, 0) - b(X_{s}^0, 0)) ds 
\\ & + \int_0^t (\sigma(X_{s}^\epsilon, \epsilon) - \sigma(X_{\eta(s)}^\epsilon, \epsilon)
+ \sigma(X_{\eta(s)}^0, 0) - \sigma(X_{s}^0, 0)) dB_s.
\end{align*}
For $\bar{e}^\epsilon(t)$, we obtain from Lemma \ref{lemB3}, 
\begin{align*}
& |b(X_{s}^\epsilon, \epsilon) - b(X_{\eta(s)}^\epsilon, \epsilon)
+ b(X_{\eta(s)}^0, 0) - b(X_{s}^0, 0)| 
\\ &\leq \ C( (\epsilon + |X_{s}^\epsilon - X_{s}^0| + |X_{\eta(s)}^\epsilon - X_{\eta(s)}^0|)|X_{s}^\epsilon - X_{\eta(s)}^\epsilon| 
\\ & \ \ \ \  + | X_{s}^\epsilon - ( X_{\eta(s)}^{\epsilon} - X_{\eta(s)}^{0} + X_s^{0}) |),
\end{align*}
and
\begin{align*}
&|\sigma(X_{s}^\epsilon, \epsilon) - \sigma(X_{\eta(s)}^\epsilon, \epsilon)
+ \sigma(X_{\eta(s)}^0, 0) - \sigma(X_{s}^0, 0)| 
\\ & \leq \ C( (\epsilon + |X_{s}^\epsilon - X_{s}^0| + |X_{\eta(s)}^\epsilon - X_{\eta(s)}^0|)|X_{s}^\epsilon - X_{\eta(s)}^\epsilon| 
\\ & \ \ \ \ + | X_{s}^\epsilon - ( X_{\eta(s)}^{\epsilon} - X_{\eta(s)}^{0} + X_s^{0}) |).
\end{align*}
Hence the integral term in $\bar{e}^\epsilon(t)$ is evaluated by 
\begin{align*}
& E[\sup_{0\leq r \leq t}|\int_0^r (b(X_{s}^\epsilon, \epsilon) - b(X_{\eta(s)}^\epsilon, \epsilon)
+ b(X_{\eta(s)}^0, 0) - b(X_{s}^0, 0))ds |^p]
\\ & \leq \ C_3
\Big( 
(\epsilon^p + 2 \sup_{0\leq s \leq t} \|X_{s}^\epsilon - X_{s}^0\|_{2p}^p ) 
\sup_{0\leq s \leq t} \|X_{s}^\epsilon - X_{\eta(s)}^\epsilon\|_{2p}^p 
\\ & \ \ \ \ + 
\sup_{0\leq s \leq t} \| X_{s}^\epsilon - ( X_{\eta(s)}^{\epsilon} - X_{\eta(s)}^{0} + X_s^{0}) \|_p^p
\Big)
\end{align*}
By using the BDG inequality, the stochastic integral term in $\bar{e}^\epsilon(t)$ also 
has the same bound (except the size of constant $C_3$). Consequently, we have by Lemma \ref{lemB1}, \ref{lemB2}, 
\[
E[\sup_{0\leq s \leq t} |\bar{e}^\epsilon(s)|^p] \leq C_4 \frac{\epsilon^p}{n^{p/2}}.
\]
Applying a similar calculus to $e^\epsilon(t)$, we also get 
\begin{align*}
E[\sup_{0\leq s \leq t} |e^\epsilon(s)|^p] &\leq C_5 \frac{\epsilon^p}{n^{p/2}} + C_6 \int_0^t 
E[| X_{\eta(s)}^\epsilon - (\bar{X}_{\eta(s)}^\epsilon - \bar{X}_{\eta(s)}^0 + X_{\eta(s)}^0) |^p]ds
\\ &\leq C_5 \frac{\epsilon^p}{n^{p/2}} + C_6 \int_0^t G_1(s)ds.
\end{align*}
This finishes the proof of Theorem \ref{firsttheorem}.
\end{proof}

\begin{proof}[Proof of Theorem \ref{secondtheorem}]
Similarly to the proof of Theorem \ref{firsttheorem}, we shall show that for 
$G_2(t) := E[\sup_{0\leq s \leq t} |X_s^{\epsilon} - \hat{Y}_t^{\epsilon, (n)}|^p]$, 
\begin{align*}
G_2(t) \leq C_1\frac{\epsilon^p}{n^p} + C_2 \int_0^t G_2(s)ds.
\end{align*}

Now we consider the decomposition 
\[
X_t^{\epsilon} - \hat{Y}_t^{\epsilon, (n)} = \tilde{e}^\epsilon(t) + \sum_{i=1}^3 \hat{e}_i^\epsilon(t), 
\]
where 
\begin{align*}
\tilde{e}^\epsilon(t) =& \int_0^t (b(X_{\eta(s)}^\epsilon, \epsilon) - b(\hat{X}_{\eta(s)}^\epsilon, \epsilon) 
+ b(\hat{X}_{\eta(s)}^0, 0) - b(X_{\eta(s)}^0, 0)) ds 
\\ & + \ \int_0^t (\sigma(X_{\eta(s)}^\epsilon, \epsilon) - \sigma(\hat{X}_{\eta(s)}^\epsilon, \epsilon) 
+ \sigma(\hat{X}_{\eta(s)}^0, 0) - \sigma(X_{\eta(s)}^0, 0)) dB_s 
\\ & + \int_0^t \int_{\eta(s)}^s ( \sigma\sigma'(X_{\eta(r)}^\epsilon, \epsilon) - \sigma\sigma'(\hat{X}_{\eta(r)}^\epsilon, \epsilon) 
\\ & \ \ \ \ + \sigma\sigma'(\hat{X}_{\eta(r)}^0, 0) - \sigma\sigma'(X_{\eta(r)}^0, 0) )dB_r dB_s, 
\end{align*}
and 
\begin{align*}
\hat{e}_1^\epsilon(t) &= 
\int_0^t \int_{\eta(s)}^s ( b\sigma'(X_{r}^\epsilon, \epsilon) - b\sigma'(X_{r}^0, 0) )dr dB_s, 
\\ \hat{e}_2^\epsilon(t) &= 
\int_0^t \int_{\eta(s)}^s ( \frac{1}{2}\sigma^2\sigma''(X_{r}^\epsilon, \epsilon) - \frac{1}{2}\sigma^2\sigma''(X_{r}^0, 0) )dr dB_s, 
\\ \hat{e}_3^\epsilon(t) &= 
\int_0^t \int_{\eta(s)}^s ( \sigma\sigma'(X_{r}^\epsilon, \epsilon) - \sigma\sigma'(X_{\eta(r)}^\epsilon, \epsilon) 
+ \sigma\sigma'(X_{\eta(r)}^0, 0) - \sigma\sigma'(X_{r}^0, 0) )dB_r dB_s.
\end{align*}
By a similar manner as in the proof of Theorem \ref{firsttheorem}, we can also obtain 
\[
E[\sup_{0\leq s \leq t } (|\tilde{e}^\epsilon(s)|^p + \sum_{i=1}^3 |\hat{e}_i^\epsilon(s)|^p) ] 
\leq C_1\frac{\epsilon^p}{n^p} + C_2 \int_0^t G_2(s)ds.
\]
Indeed, compared with Theorem \ref{firsttheorem}, the reason why we can get the rate $n^{-p}$ above 
is due to the additional integrals $\int_{\eta(s)}^s \cdot \ dr$ or 
$\int_{\eta(s)}^s \cdot \ dB_r$ inside the error terms.
\end{proof}

%%%%%%%%%%%%%%%%%%%%%%%%%%%%%%%%%%%%%%%%%%%%%%%%%%
\section{Proof of Theorem \ref{thm:mlmc}}\label{sec:proofmlmc}
Throughout this section, we use the following notations without confusion:
\begin{itemize}
\item $f(X_T^\epsilon) \equiv f( (X_T^\epsilon)^{(1)})$, $f(\bar{X}_T^\epsilon) \equiv f((\bar{X}_T^\epsilon)^{(1)})$.
\item $\|f\|_{\mathrm{Lip}} := \inf\{ K\geq 0 : |f(x) -f(y)| \leq K |x-y|, \mbox{ for all } x, y \in \R \}$.
\item $\|f\|_{\mathrm{TV}} := \sup_{-\infty < x_0 < \cdots < x_m < \infty} \sum_{j=1}^m |f(x_j)-f(x_{j-1})|$.
\end{itemize}
We say $f$ has bounded variation in $\R$ if $\|f\|_{\mathrm{TV}}<\infty$.

The following lemma plays a crucial role in the proof of the theorem.
\begin{Alem}[Avikainen \cite{A09}, Theorem 2.4]\label{lemC1}
Let $X$ and $\hat{X}$ be real valued random variables with $X, \hat{X} \in L^p$ $(p\geq 1)$. 
In addition, suppose $X$ has a bounded density. 
Then for any function $f$ of bounded variation in $\R$ and $q \geq 1$, there exists a constant $C > 0$ 
depending on $p, q$, and the essential supremum for a density of $X$ such that
\[
\|f(X)-f(\hat{X})\|_q \leq C \|f\|_{\mathrm{TV}} \|X - \hat{X}\|_p^{\frac{p}{q(p+1)}}.
\]
\end{Alem}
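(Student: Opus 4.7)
The plan is to reduce, via triangle inequality, the MLMC variance to the $L^2$-estimate
\[
\| f(X_T^\epsilon) - f(\bar{X}_T^{\epsilon,(n)}) + f(\bar{X}_T^{0,(n)}) - f(X_T^0) \|_2, \qquad n \in \{n_l, n_{l-1}\},
\]
exactly as in the proof of Proposition \ref{MLMCsmooth}. Since $f$ is only Lipschitz here, the $C^2$-input (Lemma \ref{lemLIP}) must be replaced by a first-order Taylor expansion around the bounded-density anchor $X_T^0$, followed by Avikainen's Lemma \ref{lemC1} applied to $f'$, which has bounded variation by hypothesis.

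Concretely one writes
\[
f(X_T^\epsilon) - f(\bar{X}_T^{\epsilon,(n)}) - f(X_T^0) + f(\bar{X}_T^{0,(n)}) = f'(X_T^0)\bigl[(X_T^\epsilon - \bar{X}_T^{\epsilon,(n)}) - (X_T^0 - \bar{X}_T^{0,(n)})\bigr] + \tilde R^\epsilon - \tilde R^0,
\]
where $\tilde R^\epsilon := f(X_T^\epsilon) - f(\bar{X}_T^{\epsilon,(n)}) - f'(X_T^0)(X_T^\epsilon - \bar{X}_T^{\epsilon,(n)})$ and $\tilde R^0$ is analogous with the $0$-superscripts. Theorem \ref{firsttheorem} together with $\|f'\|_\infty<\infty$ gives the linear part an $O(\epsilon n^{-1/2})$ bound in $L^2$, which already fits within the claim.

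For the remainder I exploit the identity
\[
\tilde R^\epsilon = (X_T^\epsilon - \bar{X}_T^{\epsilon,(n)}) \int_0^1 \bigl[ f'\bigl(\bar{X}_T^{\epsilon,(n)} + s (X_T^\epsilon - \bar{X}_T^{\epsilon,(n)})\bigr) - f'(X_T^0) \bigr]\,ds,
\]
apply Hölder with $1/a+1/b=1/2$, and bound the $f'$-difference by Avikainen's Lemma \ref{lemC1} anchored at $X_T^0$. The shift is controlled by Lemmas \ref{lemB1}-\ref{lemB2}, giving $\|\bar{X}_T^{\epsilon,(n)} + s(X_T^\epsilon - \bar{X}_T^{\epsilon,(n)}) - X_T^0\|_p \le C(\epsilon + n^{-1/2})$ for every $p\geq 1$. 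Combined with $\|X_T^\epsilon - \bar{X}_T^{\epsilon,(n)}\|_a \le C n^{-1/2}$ this produces
\[
\| \tilde R^\epsilon \|_2 \le C\,n^{-1/2}\,(\epsilon + n^{-1/2})^{p/(b(p+1))}.
\]
Taking $a$ large, $b$ slightly above $2$, and $p\to\infty$, the exponent $p/(b(p+1))$ can be pushed arbitrarily close to $1/2$ from below, yielding the dominant contribution $C\epsilon^{(1-\delta)/2}n^{-1/2}$ for any preassigned $\delta>0$. The companion $\tilde R^0$ is treated identically, but with the shift of size $O(n^{-1/2})$ rather than $\epsilon + n^{-1/2}$, hence contributes strictly less. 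Applying this estimate at both levels $n=n_l$ and $n=n_{l-1}$ and summing as in Proposition \ref{MLMCsmooth} completes the proof.

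The main technical obstacle is the judicious placement of the Avikainen anchor. Only $X_T^0$ and $\bar{X}_T^{0,(n)}$ are guaranteed a bounded density, so the intermediate convex combinations $(1-s)\bar{X}_T^{\epsilon,(n)}+sX_T^\epsilon$ cannot themselves be used as the bounded-density argument; this is what forces the Taylor center to be $X_T^0$ (rather than $\bar{X}_T^{\epsilon,(n)}$, which would seem natural) and the Avikainen shift to be measured against $X_T^0$. The Hölder pair $(a,b)$ must then be tuned so that $\|X_T^\epsilon - \bar{X}_T^{\epsilon,(n)}\|_a$ contributes the full strong-convergence rate $n^{-1/2}$ while the Avikainen exponent $p/(b(p+1))\approx 1/2$ simultaneously supplies the acceleration factor $\epsilon^{(1-\delta)/2}$ in a single product.
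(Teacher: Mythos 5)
There is a fundamental mismatch here: the statement you were asked to prove is Lemma \ref{lemC1}, i.e.\ Avikainen's inequality
$\|f(X)-f(\hat{X})\|_q \leq C \|f\|_{\mathrm{TV}} \|X - \hat{X}\|_p^{p/(q(p+1))}$
for a bounded-variation function $f$ and a random variable $X$ with bounded density. Your proposal never addresses this inequality. What you have written is an argument for Theorem \ref{thm:mlmc} (the variance bound $\mathrm{Var}^{1/2}(\bar{P}_l^{\mathrm{new}} - \bar{P}_{l-1}^{\mathrm{new}}) \leq C \epsilon^{(1-\delta)/2} n_{l}^{-1/2}$ for Lipschitz payoffs), and moreover your argument explicitly \emph{invokes} Lemma \ref{lemC1} as a black box to control the $f'$-differences. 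As a proof of Lemma \ref{lemC1} this is circular; as a proof of anything, it is aimed at the wrong target. Note also that the paper itself does not prove Lemma \ref{lemC1} --- it is quoted verbatim from Avikainen (Theorem 2.4 of \cite{A09}) and used as an external ingredient.

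If you did intend to prove Lemma \ref{lemC1}, none of the machinery you deploy (the SDE estimates of Lemmas \ref{lemB1}--\ref{lemB2}, Theorem \ref{firsttheorem}, the Taylor expansion around $X_T^0$) is relevant: the lemma is a general statement about two real random variables and a BV function, with no SDE structure. The standard route is to reduce to indicator functions via the representation of a BV function through its (signed) variation measure, write $E[|f(X)-f(\hat X)|^q] \leq \|f\|_{\mathrm{TV}}^{q-1}\,\|f\|_{\mathrm{TV}}\sup_a E[|1_{\{X\leq a\}}-1_{\{\hat X\leq a\}}|]$-type bounds, estimate $P(X\leq a, \hat X>a) + P(X>a,\hat X\leq a)$ by splitting on whether $|X-\hat X|$ exceeds a threshold $\delta$ (using Markov's inequality with the $L^p$ norm on one event and the bounded density of $X$ on the other), and then optimize over $\delta$; this optimization is exactly what produces the exponent $p/(p+1)$, and the outer power $1/q$ gives the final exponent $p/(q(p+1))$. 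Your proposal contains none of these steps. Separately, if your text were judged as an attempt at Theorem \ref{thm:mlmc}, it is broadly in the spirit of the paper's proof of that theorem, but even there the paper anchors the mean-value expansion on the segment joining $X_T^\epsilon,\bar X_T^\epsilon$ to $X_T^0,\bar X_T^0$ at the \emph{same} interpolation parameter $\rho$ (so that Lemma \ref{lemC1} is applied with the bounded-density variable $\rho X_T^0+(1-\rho)\bar X_T^0$), rather than freezing $f'$ at $X_T^0$; your version would additionally need the smoothing/truncation step (Lemma \ref{lemC2}) to pass from Lipschitz $f$ to $C^1$ $f$, which you omit.
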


By the next lemma, we can obtain an approximation sequence of the payoff $f$.
\begin{Alem}\label{lemC2}
Let $f$ be a bounded Lipschitz continuous function whose weak derivative has bounded variation in $\R$. 
Then there exists a sequence $(f_j)_{j \geq 1} \subset C^1(\R)$ such that 
\begin{align*}
& \| f - f_j \|_\infty \rightarrow 0, \ \mbox{ as } j \rightarrow \infty, 
\\ & \| f_j' \|_\infty \leq \| f \|_{\mathrm{Lip}} \ \mbox{ for all } j \geq 1, 
\\ & \| f_j' \|_{\mathrm{TV}} \leq \| f' \|_{\mathrm{TV}} \ \mbox{ for all } j \geq 1.
\end{align*}
\end{Alem}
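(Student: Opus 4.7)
\noindent\textbf{Proposal for the proof of Lemma \ref{lemC2}.}
The plan is to construct $(f_j)$ by standard mollification. Fix $\phi \in C_c^\infty(\R)$ with $\phi \geq 0$, $\operatorname{supp}\phi \subset [-1,1]$ and $\int_\R \phi\,dx = 1$, set $\phi_j(x) := j\phi(jx)$, and define $f_j := f * \phi_j$. Immediately $f_j \in C^\infty(\R) \subset C^1(\R)$. The three claimed bounds will be verified in turn, using only the fact that Lipschitz continuity implies absolute continuity (so $f$ has a bounded pointwise derivative a.e.\ that coincides with the weak derivative $f'$) and the hypothesis $f' \in \mathrm{BV}(\R)$.

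For the uniform convergence, I would use that $f$ is $\|f\|_{\mathrm{Lip}}$-Lipschitz to estimate
\begin{equation*}
|f_j(x) - f(x)| = \Bigl| \int_\R (f(x-y)-f(x))\, \phi_j(y)\,dy \Bigr|
\leq \|f\|_{\mathrm{Lip}} \int_\R |y|\,\phi_j(y)\,dy \leq \|f\|_{\mathrm{Lip}}/j,
\end{equation*}
which gives $\|f-f_j\|_\infty \to 0$. For the sup-norm bound on $f_j'$, since $f$ is absolutely continuous, differentiation can be passed through the convolution to obtain $f_j' = f' * \phi_j$, so
\begin{equation*}
\|f_j'\|_\infty \leq \|f'\|_\infty \, \|\phi_j\|_{L^1} = \|f\|_{\mathrm{Lip}}.
\end{equation*}

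For the total variation bound, the key identity is again $f_j' = f' * \phi_j$, reducing the statement to the general fact that convolution with an $L^1$ kernel of unit mass does not increase total variation. Concretely, for any partition $-\infty < x_0 < x_1 < \cdots < x_m < \infty$,
\begin{align*}
\sum_{k=1}^m |f_j'(x_k) - f_j'(x_{k-1})|
&= \sum_{k=1}^m \Bigl| \int_\R (f'(x_k - y) - f'(x_{k-1}-y))\,\phi_j(y)\,dy \Bigr| \\
&\leq \int_\R \sum_{k=1}^m |f'(x_k - y) - f'(x_{k-1}-y)|\,\phi_j(y)\,dy \\
&\leq \|f'\|_{\mathrm{TV}} \int_\R \phi_j(y)\,dy = \|f'\|_{\mathrm{TV}}.
\end{align*}
Taking the supremum over partitions yields $\|f_j'\|_{\mathrm{TV}} \leq \|f'\|_{\mathrm{TV}}$.

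The only delicate point, and the one I would treat carefully, is the identification $f_j' = f' * \phi_j$ together with the measure-theoretic interpretation of $f'$: since $f$ is Lipschitz, Rademacher's theorem (or direct integration by parts with a compactly supported $C^\infty$ test function) justifies moving the derivative onto $\phi_j$ first and then transferring it back to $f$ via the absolutely continuous representative, so that $f' \in L^\infty \cap \mathrm{BV}$ acts by convolution in the expected way. Once this is settled, the three estimates above give the lemma without further work.
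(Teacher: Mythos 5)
Your proof is correct and takes essentially the same route as the paper: the paper's own proof is just the one-line remark that mollifier convolutions $f_h = f*\phi_h$ with a standard nonnegative, compactly supported, unit-mass kernel do the job, and your argument is exactly that construction with the three estimates (uniform convergence via the Lipschitz bound, $f_j' = f'*\phi_j$ for the sup-norm bound, and the fact that convolution with a unit-mass kernel does not increase total variation) verified in detail. No gaps.
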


\begin{proof}
The approximate sequence can be constructed by mollifier convolutions 
$f_h := (f * \phi_h)$, that is, $\phi_h := \frac{1}{h}\phi(\frac{x}{h})$ 
with the conditions (i) $\phi \in C^\infty$, (ii) $\mathrm{supp}(\phi) \subset \{|x| \leq 1\}$, 
(iii) $\phi \geq 0$, and (iv) $\int_\R \phi(x) dx = 1$.
\end{proof}

\begin{proof}[Proof of Theorem \ref{thm:mlmc}]
Assume that $f$ is a bounded $C^1$ function whose derivative has bounded variation. 
As seen in Proposition \ref{MLMCsmooth}, note that
\begin{align*}
& \| f(X_T^\epsilon) - f(\bar{X}_T^{\epsilon}) + f(\bar{X}_T^{0}) - f(X_T^0) \|_2
\\ &\leq 
\| \int_0^1 (f'(\rho X_T^\epsilon + (1-\rho)\bar{X}_T^\epsilon) 
- f'(\rho X_T^0 + (1-\rho)\bar{X}_T^0)) d\rho 
\cdot ( (X_T^\epsilon)^{(1)} - (\bar{X}_T^\epsilon)^{(1)} )\|_2
\\ & \ \ \ \ + \|f\|_{\mathrm{Lip}}\| X_T^\epsilon - \bar{X}_T^{\epsilon} + \bar{X}_T^{0} - X_T^0\|_2.
\end{align*}
The final line is bounded by $\|f\|_{\mathrm{Lip}} \times O(\epsilon / n^{1/2})$, 
and thus we turn to focus on the estimate for the second line. 
The second line is bounded by 
\begin{align*}
& \| \int_0^1 (f'(\rho X_T^\epsilon + (1-\rho)\bar{X}_T^\epsilon) 
- f'(\rho X_T^0 + (1-\rho)\bar{X}_T^0)) d\rho\|_{2p} 
\|X_T^\epsilon - \bar{X}_T^\epsilon\|_{2q}
\\ & \leq \frac{C_{p,T,x_0}}{n^{1/2}}  \int_0^1 \| (f'(\rho X_T^\epsilon + (1-\rho)\bar{X}_T^\epsilon) 
- f'(\rho X_T^0 + (1-\rho)\bar{X}_T^0)) \|_{2p} d\rho
\end{align*}
for any $p, q > 1$ such that $1/p + 1/q = 1$. Now using Lemma \ref{lemC1}, we have 
\begin{align*}
& \int_0^1 \| (f'(\rho X_T^\epsilon + (1-\rho)\bar{X}_T^\epsilon) 
- f'(\rho X_T^0 + (1-\rho)\bar{X}_T^0)) \|_{2p} d\rho 
\\ & \leq C_{p,r} \|f' \|_{\mathrm{TV}} \int_0^1 \| (\rho X_T^\epsilon + (1-\rho)\bar{X}_T^\epsilon) 
- (\rho X_T^0 + (1-\rho)\bar{X}_T^0) \|_{2p}^{\frac{r}{2p(r+1)}} d\rho
\\ & \leq C_{p,r,T,x_0}\|f' \|_{\mathrm{TV}} \ \epsilon^{\frac{r}{2p(r+1)}}.
\end{align*}
To obtain the result, we choose small $p > 1$ and large $r \geq 1$ such that $\frac{r}{2p(r+1)} > \frac{1}{2} - \delta$.

Finally, for general $f$, consider $f_K := (f \wedge K) \vee (-K)$ for $K > 0$ as a first approximation, 
and apply Lemma \ref{lemC2} to $f_K$. Then we obtain the desired result by taking the limit. 
\end{proof}

%%%%%%%%%%%%%%%%%%%%%%%%%%%%%%%%%%%%%%%%%%%%%%%%%%%%%%%%%%%%%%%%%%%%%%%%%%%%%%%
\section*{Acknowledgement}
The authors would like to thank Arturo Kohatsu-Higa and Akihiko Takahashi for their helpfull comments. 
The first author was supported by JSPS Research Fellowships for Young Scientists, and 
this work was supported by JSPS KAKENHI Grant Number 12J03138. 

%%%%%%%%%%%%%%%%%%%%%%%%%%%%%%%%%%%%%%%%%%%%%%%%%%%%%%%%%%%%%%%%%%%%%%%%%%%%%%%
% ---------------------------------

\end{document}